\pdfoutput=1
\documentclass[journal,draftcls,onecolumn,12pt,twoside]{IEEEtranTCOM}

\usepackage[utf8]{inputenc}

\usepackage{amsmath,amssymb,amsfonts,amsthm}
\allowdisplaybreaks
\usepackage{graphicx}
\usepackage{xcolor}
\usepackage{pgfplots}
\usepackage{tikz}
\usetikzlibrary{decorations.markings}
\usepackage{balance}

\usepackage[english]{babel}

\newtheorem{theorem}{{Theorem}}
\newtheorem{corollary}{{Corollary}}
\newtheorem{definition}{{Definition}}
\newtheorem{remark}{{Remark}}

\title{A Pre-Transformation Method to Increase the Minimum Distance of Polar-Like Codes}
\author{Samet Gelincik, \IEEEmembership{Member, IEEE}, Philippe Mary, \IEEEmembership{Member, IEEE}, Anne Savard, \IEEEmembership{Member, IEEE} and Jean-Yves Baudais\thanks{S. Gelincik, P. Mary and J.-Y. Baudais are with Univ Rennes, INSA Rennes, CNRS, IETR-UMR 6164, F-35000 Rennes, France}\thanks{A. Savard is with IMT Nord Europe, Institut Mines T\'el\'ecom, Centre for Digital Systems, F-59653 Villeneuve d’Ascq, France.} \thanks{This work has been partially supported by IRCICA, CNRS USR 3380, Lille, France and the French National Agency for Research (ANR) under grant ANR-16-CE25-0001 ARBurst. 
Part of the content of this paper has been submitted to ISIT 2022 \cite{long_vs}.}}

%

\usepackage{hyperref}

\begin{document}
	
	\maketitle
	
	\begin{abstract}
		Reed Muller (RM) codes are known for their good minimum distance. One can use their structure to construct polar-like codes with good distance properties by choosing the information set as the rows of the polarization matrix with the highest Hamming weight, instead of the most reliable synthetic channels. However, the information length options of RM codes are quite limited due to their specific structure. In this work, we present sufficient conditions to increase the information length by at least one bit for some underlying RM codes and in order to obtain pre-transformed polar-like codes with the same minimum distance than lower rate codes.The proofs give a constructive method to choose the row triples to be merged together to increase the information length of the code and they follow from partitioning the row indices of the polar encoding matrix with respect to the recursive structure imposed by the binary representation of row indices. Moreover, our findings are combined with the method presented in \cite{ICC_row_merging} to further reduce the number of minimum weight codewords. Numerical results show that the designed codes perform close to the meta-converse bound at short blocklengths and better than the polarization-adjusted-convolutional polar codes with the same parameters.
	\end{abstract}
	
	\begin{IEEEkeywords}
		Polar codes, Reed Muller codes, minimum distance, finite block length.
	\end{IEEEkeywords}
	
	
	\section{Introduction}

New usages and services of 5G and beyond wireless systems, such as machine type communication or ultra-reliable low latency communications are pushing the limits of channel coding by requiring efficient error correcting codes at short to moderate block lengths. Indeed, These use-cases involve communicating objects that either occasionally transmit short packets at low power transmission to increase the device lifetime or because to meet stringent latency constraint~\cite{8594709}.
	
Polar codes, the first provably asymptotically capacity achieving error correcting codes over binary input memoryless channels \cite{polar} with explicit construction, are currently used over the control channels of 5G networks \cite{Standard5G}. They also are envisioned for ultra-reliable low-latency communications and massive machine-type communications \cite{LAND_5G} thanks to their low complexity successive cancellation based decoder. Unfortunately, standard polar codes do not show outstanding performance at short-to-moderate block lengths due to their poor minimum distance and a non-complete polarization \cite{9328621}. Several methods, such as enhanced-Bose–Chaudhuri–Hocquenghem subcodes \cite{Polar_sub} and low-weight-bit polar codes \cite{ISCAN}, have hence been proposed to improve their distance spectrum. Cyclic-redundancy-check (CRC) aided successive cancellation list (SCL) decoding, which boosts the performance by choosing the best decoding paths in a hierarchical tree, has been proposed in \cite{Tal_vardy} and the obtained performances were further enhanced by optimizing the CRC polynomial to improve the minimum distance of the obtained codes \cite{crc1,crc2}. The later was considered as the best code design in terms of Frame Error Rate (FER) up to the introduction of polarized adjusted convolutional (PAC) polar codes in \cite{arikan2019}.
	
PAC polar codes \cite{arikan2019}, by choosing the information set of the polar codes according to the Reed-Muller (RM) rule, i.e. the rows of the polarization matrix with the highest Hamming weights, perform very close to the second-order rate approximation of the binary-input additive white Gaussian noise in the short block length regime. It is a special case of the convolutional pre-transformation with an upper-triangular matrix, which has been proven not to reduce the minimum distance of underlying RM code while reducing the number of minimum weight codewords if properly designed \cite{2019pretransformed}. Since polar codes are tailored for a given channel, authors in \cite{8680016} proposed a genetic algorithm to obtain the frozen set that minimizes the bit or block error rate of the code over additive white Gaussian noise channel and Rayleigh channel. The obtained polar code does not require CRC and the proposed method either outperform CRC-aided polar codes occasionally or performs equally well in terms of bit and block error rate. Since the construction proposed in \cite{8680016} is signal to noise ratio dependent, authors in \cite{9389787} proposed a genetic algorithm, extended from \cite{8680016}, that relies on the distance spectrum of the code rather than an error rate criterion. This technique allows to achieve the normal approximation of the second-order rate in Gaussian noise channel. However, genetic algorithms are rather complex and strongly depend on the population size and its initial state. Contrarily, in our work, we give an explicit method to increase the minimum distance while adding additional information bits on the underlying RM code.
	
In \cite{ICC_row_merging}, we proposed a method to decrease the number of low weight codewords compared to RM and PAC codes in the short block length regime. The approach relies on encoding some information bits by the sum of two or three rows of the polar encoding matrix. The pairs and triplets of the merged rows are determined via the connection between the binary representation of the selected row indices and their common $1$ bit-positions. Notably, our designed codes achieve the same performance as PAC polar codes of the same parameters without extra computational complexity.
	
In this paper, we extend our previously proposed method \cite{ICC_row_merging} and state sufficient conditions to increase the information length of some polar-like codes, i.e. increasing the rate for a fixed given codeword length, where the information set is built according to the RM rule. Moreover, we explicitly give the corresponding pre-transformation matrix to sustain the same minimum distance as the RM code. The analysis is conducted by extending the method introduced in \cite{9184131} that partitions the row indices of the encoding matrix according to the indices of their binary representation. Numerical simulations show that our proposed code design outperforms the PAC polar codes with the same parameters in terms of FER, and performs close to the meta-converse (MC) bound thanks to the decrease in the number of minimum weight codewords achieved by our previously proposed algorithm in \cite[Algortihm 1]{ICC_row_merging}.

The rest of the paper is organized as follows. Section \ref{sec:prel} introduces the notations and the important definitions that are used in the proofs of our findings. Section \ref{sec:addinfbit} presents our main contribution with the statement of three theorems. Theorems \ref{thm:main_min_dis_inc_three_rows} and \ref{thm:main_min_dis_inc_with_intrsctn_three_rows} state the conditions for which the row merging increases the Hamming weight of the resulting row for the case where there is no common 1-bit position among the binary representations of the row indices and when there are such common positions, respectively. Theorem \ref{thm:Last_ensemble} is the main result of the paper and claims that is possible to increase the information length without decreasing the minimum distance of the code. In Section \ref{sec:codeconstruc}, we present our proposed polar-like code design which increases the number of information bits while exhibiting a high minimum distance. Section \ref{sec:simul} compares the performance of our proposed design with PAC polar codes and with the meta-converse bound. Finally, conclusions are drawn in Section \ref{sec:conclusion}.
	
	

	\section{Preliminaries} \label{sec:prel}
	
	\subsection{Notations}
	
	The positions of elements in a vector of length $N$ is indexed from $0$ to $N-1$. Any vector of length $N$ is considered as a row vector and is denoted by $\mathsf{x}$ or $\mathsf{x}^{N-1}$. The $j^{\text{th}}$ entry of the vector $\mathsf{x}$ is denoted as $x_j$. The set of positive integers is $\mathbb{N}$ and the binary field is $\mathbb{F}_2$. The set of integers from $j$ to $k-1$ is represented by $[j,k)$ or $[j,k-1]$. Uppercase calligraphic letters, such as $\mathcal{A}$, are reserved to index sets. Any index set is sorted in the ascending order and $\mathcal{A}(i)$, $i \in [0,|\mathcal{A}|)$ denotes the $i$-th element of $\mathcal{A}$. Specifically, we set $\mathcal{N}:=[0,N)$. For any given two index sets $\mathcal{A}$ and $\mathcal{B}$, $\mathcal{A} \succ \mathcal{B} $ denotes that any element of $\mathcal{A}$ is larger than any element of $\mathcal{B}$, i.e., $\mathcal{A}(i)>\mathcal{B}(j)$ $\forall i\in[0,|\mathcal{A}|)$ and $\forall j\in[0,|\mathcal{B}|)$.
	For a given binary vector $\mathsf{x} \in\mathbb{F}_2^{1\times N}$ and index set $\mathcal{A}\subset\mathcal{N}$, $\mathsf{x}_{\mathcal{A}}$ denotes the vector consisting of the elements of $\mathsf{x}$ at the positions indexed by $\mathcal{A}$. The matrices are denoted by uppercase sans serif font, e.g., $\mathsf{G}$. Uppercase boldface letters denote set of vectors, e.g., $\mathbf{C}$.
	The indicator function is $\mathbb{I}\{\cdot\}$. The sets $\mathcal{P}_1(\cdot)$ and $\mathcal{P}_0(\cdot)$ denote the indices of $1$'s and $0$'s of a given vector, respectively.
	
	
	For any $0\leq j<2^n$, its $n$-bit binary representation is denoted by the vector $\mathsf{b}_j^n$, or $\mathsf{b}_j$ if it is clear enough from the context. The $\ell$-th bit position of $\mathsf{b}_j$ is denoted by $ b_{j,\ell}$, $0\leq \ell < n $ and the indexing is started from the least significant bit, which is placed at the rightmost position. The number of $1$'s and $0$'s in a vector is represented by $i_1(\cdot)$ and $i_0(\cdot)$, respectively. 
	
	The operator $\bar{\cup}$ represents the element-wise 'OR' operation of binary vectors such that, for all $(j_1, j_2) \in \left[0, 2^n\right)^2$:
	\begin{align}
		b_{j_1,\ell} \bar{\cup} b_{j_2,\ell} =1, \ \text{if} \; \; & b_{j_1,\ell}=1 \ \text{or} \ b_{j_2,\ell}=1
	\end{align}
	The operator $\bar{\cap}$ represents the element-wise 'AND' operation of binary vectors such that
	\begin{align}
		b_{j_1,\ell} \bar{\cap} b_{j_2,\ell} =1, \ \text{if} \ b_{j_1,\ell}=b_{j_2,\ell}=1 
	\end{align}
	The operator $\oplus$ denotes binary addition in $\mathbb{F}_2$.
	
	\subsection{Properties of the Polar Encoding Matrix}
	
	For any given $N=2^n, \, n\in \mathbb{N}$, the polarization matrix is $\mathsf{G}=\mathsf{G}_2^{\otimes n}$ where
	\begin{align}
		\mathsf{G}_2:= \begin{bmatrix}
			1 & 0 \\
			1 & 1 
		\end{bmatrix}
	\end{align}
	is the corresponding kernel matrix and $\otimes$ is Kronecker pro\-duct. The $j$th row $\mathsf{g}_j$ of $\mathsf{G}$ can be represented by
	\begin{align}
		\mathsf{g}_j=\hat{\mathsf{g}}_{b_{j,n-1}}\otimes \hat{\mathsf{g}}_{b_{j,n-2}}\otimes \cdots \otimes \hat{\mathsf{g}}_{b_{j,0}} \label{Eq:g_rows0}
	\end{align}
	where $\hat{\mathsf{g}}_0=[1 \ \ 0]$ and $\hat{\mathsf{g}}_1=[1 \ \ 1]$.
	By \eqref{Eq:g_rows0}, for a given $n\in \mathbb{N}$, the $j$th row of $\mathsf{G}$ can be divided into $n$ disjoint regions, i.e.
	\begin{align}\label{Eq:reg_div}
		\mathsf{r}_{j,\ell} =
		\begin{cases}
			\mathsf{0}^{2^\ell-1} & \text{if } b_{j,\ell}=0\\
			[\mathsf{r}_{j,0} \mathsf{r}_{j,1} \cdots \mathsf{r}_{j,\ell-1}] & \text{if } b_{j,\ell}=1
		\end{cases}
	\end{align}
	for $\ell \in [1,n)$ and $\mathsf{r}_{j,0}=\hat{\mathsf{g}}_{b_{j,0}}$.
	Close inspection of the recursive nature of $\mathsf{r}_{j,\ell}$ reveals that each bit position $\ell\in [0,n)$ of $\mathsf{b}_j$ is associated with a set of positions at $\mathsf{g}_j$ denoted by the index set $\mathcal{M}_{\ell}\subset \mathcal{N}$
	\begin{align}\label{}
		\mathcal{M}_{\ell}\hspace{-0.08cm}:=\{k: b_{k,\ell}=1, k\in \mathcal{N} \}
	\end{align}
	and $\mathcal{M}^{c}_{\ell}:=\mathcal{N}\setminus \mathcal{M}_{\ell}$.
	The fact that $\mathsf{g}_{j,\mathcal{M}_{\ell}}=\mathsf{0}^{N/2-1}$ if $b_{j,\ell}=0$ imposes that $\mathsf{g}_{j,\mathcal{M}_{\ell}^{c}}$ is independent from the value of $b_{j,\ell}$ \cite{ICC_row_merging}. The following definition highlights this fact.
	
	\begin{definition} \label{def:1}
		The projection
		of a row $\mathsf{g}_j$ of the polar encoding matrix onto indices of $\mathcal{M}_{\ell}^{c}$ is denoted by $\mathsf{g}_{j}^{\ell}$ and given as
		\begin{align}\label{eq:proj_v0}
			\mathsf{g}_{j}^{\ell}&:=\hat{\mathsf{g}}_{b_{j,n-1}}\otimes\cdots \otimes \hat{\mathsf{g}}_{b_{j,\ell+1}}\otimes\hat{\mathsf{g}}_{b_{j,\ell-1}} \cdots \otimes \hat{\mathsf{g}}_{b_{j,0}}
		\end{align}
	\end{definition}
	
	
	Note that, by \eqref{Eq:reg_div}, $\mathsf{b}_{j,\ell} = 1$ imposes that $[\mathsf{r}_{j,0} \mathsf{r}_{j,1} \cdots \mathsf{r}_{j,\ell-1}]$ is copied to $\mathsf{r}_{j,\ell}$ and $\mathsf{r}_{j,t>\ell}$ is obtained with respect to corresponding bit values. Hence, the projection of $\mathsf{g}_j$ onto $\mathcal{M}_{\ell}$ is the same as $\mathsf{g}_{j,\mathcal{M}_{\ell}^{c}}$ if $b_{j,\ell}=1$
	\begin{align}\label{eq:proj_v1}
		\mathsf{g}_{j,\mathcal{M}_{\ell}}=\begin{cases}
			\mathsf{0}^{\frac{N}{2}-1} & \text{if } b_{j,\ell}=0\\
			\mathsf{g}_{j}^{\ell} & \text{if } b_{j,\ell}=1
		\end{cases}
	\end{align}
	
	The following definition is the generalization of Definition~\ref{def:1}. 
	
	\begin{definition}\label{def:2}
		The projection of row $\mathsf{g}_j$ of the polar encoding matrix onto $\displaystyle \cap_{\ell \in \mathcal{B}}\mathcal{M}^{c}_{\ell}$ is denoted by $\mathsf{g}_j^{\mathcal{B}}$ and $\mathsf{g}_j|\cap_{\ell \in \mathcal{B}}\mathcal{M}^{c}_{\ell}$, and is given as
		\begin{align}\label{eq:subset_proj0}
			\mathsf{g}^{\mathcal{B}}_{j}:&=\mathsf{g}_{j}|\displaystyle{\cap_{\ell \in \mathcal{B}}\mathcal{M}^{c}_{\ell}}\nonumber\\
			&=\hat{\mathsf{g}}_{b_{j,\mathcal{W}(|\mathcal{W}|-1)}}\otimes \hat{\mathsf{g}}_{b_{j,\mathcal{W}(|\mathcal{W}|-2)}}\otimes \cdots \otimes \hat{\mathsf{g}}_{b_{j,\mathcal{W}(0)}}
		\end{align}
		where $\mathcal{W}:=[0,n)\setminus\mathcal{B}$.
	\end{definition}
	
	Note that, similar to \eqref{eq:proj_v1}, for any subset $\mathcal{B}_0\subset \mathcal{B}$, the projection of $\mathsf{g}_j$ onto $\cap_{\ell\in \mathcal{B}_0}\mathcal{M}_{\ell}\cap_{\hat{\ell}\in \mathcal{B}/\mathcal{B}_0}\mathcal{M}^{c}_{\hat{\ell}}$ is given by
	\begin{align}\label{eq:subset_proj1}
		\mathsf{g}_{j}|\cap_{\ell\in \mathcal{B}_0}\mathcal{M}_{\ell}\cap_{\hat{\ell}\in \mathcal{B}\backslash \mathcal{B}_0}\mathcal{M}^{c}_{\hat{\ell}}=
		\left\{\begin{array}{l@{\ }l}
			\mathsf{0}^{\frac{N}{|\mathcal{B}|}-1} & \text{if } \bar{\cap}_{\ell\in \mathcal{B}_0}b_{j,\ell}=0\\
			\mathsf{g}^{\mathcal{B}}_{j} & \text{if } \bar{\cap}_{\ell\in \mathcal{B}_0}b_{j,\ell}=1
		\end{array}\right.
	\end{align}
	
	\subsection{Row Merging Pre-transformed Polar-like Codes and RM Codes}
	
	A polar-like code $(N=2^n,k)\in \mathbb{N}^2$, is constructed as
	\begin{align}
		\mathbf{C}=\{\mathsf{c}=\mathsf{u}\mathsf{G}:\mathsf{u}\in\mathbb{F}^{n}_2, \mathsf{u}_{\mathcal{F}}=\mathsf{0} \}
	\end{align}
	where $\mathcal{F}$ is the index set of the frozen bit positions, and $\mathcal{A}=\mathcal{N}\setminus \mathcal{F}$ is the information set. For classical polar codes under SC decoding, the set $\mathcal{A}$ is the set of the most reliable bit sub-channels \cite{polar}. However in this paper, we allow to choose the information set differently. From this perspective, a RM$(n,r)$ code of degree $r$ can be seen as a polar-like code of information set 
	\begin{align}
		\mathcal{A}= \bigcup_{p=n-r}^{n}\mathcal{N}_{p},\quad \mathcal{N}_{p}:=\{t:i_1(\mathsf{b}_t)=p, t\in\mathcal{N}\}.
	\end{align}
	In \cite{urbanke}, the minimum distance of a polar-like code is given by
	\begin{align}\label{eq:low_bound_d}
		d(\mathbf{C})= \min_{i\in \mathcal{A}}i_1(\mathsf{g}_i)\overset{(a)}{=}2^{\min_{i\in \mathcal{A}}i_1(\mathsf{b}_i)}
	\end{align}
	where (a) is due to \cite[Theorem 2]{ICC_row_merging}.
	
	The pre-transformed polar-like codes \cite{2019pretransformed} is obtained through a pre-transformation matrix $\mathsf{T}\in \mathbb{F}^{N \times N}_2$ %
	\begin{align}
		\mathbf{C}_{\mathbf{P}}=\{\mathsf{c}=\mathsf{u}\mathsf{T}\mathsf{G}:\mathsf{u}\in\mathbb{F}^{n}_2,\mathsf{u}_{\mathcal{F}}=\mathsf{0} \}
	\end{align}
	where $\mathsf{T}$ is an upper triangular matrix with $\mathsf{T}_{i,i}=1$, $i\in \mathcal{N}$ and $\mathcal{F}_d:=\{j: T_{i,j}=1, i\in \mathcal{N}, j>i\}$ is the set of dynamic frozen bits. If $\mathsf{T}$ is restricted such that $ |\{i:\mathsf{T}_{i,j}, i\in \mathcal{N}\}|\in\{1,2\}$
	$\forall j\in \mathcal{F}_d$, then $\mathsf{T}$ turns out to be a row merging pre-transformation matrix since some information bits are encoded with more than one row of the polarization matrix but any frozen row can be associated with at most one information row 
	\begin{equation}
		\mathsf{c}=\mathsf{u}\mathsf{T}\mathsf{G}=\mathsf{u}\tilde{\mathsf{G}}
	\end{equation}
	with
	\begin{align}
		\tilde{\mathsf{g}}_i=\mathsf{g}_i\bigoplus_{j\in \mathcal{P}_1(\tilde{\mathsf{t}}_i)\backslash i}\mathsf{g}_j
	\end{align}
	where $\tilde{\mathsf{t}}_i$ is the $i$-th row of $\mathsf{T}$.

	\section{Adding Information Bits to RM Information Set by Sustaining the Same Minimum Distance}\label{sec:addinfbit}
	
	In this section, we present how to obtain triples of polarization matrix rows to keep the same minimum distance as the underlying RM code and state the size of information length increment for some given parameters. 
	Let $\mathcal{T}\subseteq \mathcal{N}$ be any subset of row indices of the polarization matrix $\mathsf{G}$ and $i \in \mathcal{N}\setminus \mathcal{T}$. Then, by $\mathsf{g}_{\mathcal{T}}$ and $\mathsf{g}_{\{i,\mathcal{T}\}}$, we denote
	\begin{align}
		\mathsf{g}_{\mathcal{T}}=\bigoplus_{t\in \mathcal{T}}\mathsf{g}_t \hspace{0.5cm}\text{   and   } \hspace{0.5cm} \mathsf{g}_{\{i,\mathcal{T}\}}= \mathsf{g}_i\oplus \mathsf{g}_{\mathcal{T}}
	\end{align}
	
	\subsection{Preliminary Theorems}
	
	
	For the sake of completeness, we first state Theorem 2 of \cite{ICC_row_merging} and give a corollary that will be exploited later on in this paper.

\begin{theorem}{\cite[Theorem 2]{ICC_row_merging}}\label{th:Hamm_of_sum0000}
Let $\mathcal{T}\subseteq \mathcal{N}$ be any subset of row indices of polar-like code generator matrix $\mathsf{G}_N$. Then, the Hamming weight of the sum of the rows $\mathsf{g}_j, \ j\in \mathcal{T}$ is given by
\begin{align}\label{Eq:Hamm_of_sum0000}
    i_1\big(\mathsf{g}_{\mathcal{T}})=\sum_{w=1}^{|\mathcal{T}|}(-2)^{w-1}\sum_{\mathcal{T}^w\subset \mathcal{T}}2^{i_1\big(\bar{\bigcap}_{j\in \mathcal{T}^{w}}\mathsf{b}_j\big)}
\end{align}
where $\mathcal{T}^{w}$ is any subset of $\mathcal{T}$ with $w$ elements.
\end{theorem}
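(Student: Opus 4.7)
My plan is to translate the Hamming weight $i_1(\mathsf{g}_{\mathcal{T}})$ into the cardinality of a symmetric difference of row supports, and then apply a general inclusion--exclusion type identity for $|\bigtriangleup_j A_j|$. The foundation is a clean bit-mask description of each row support: from the Kronecker product form \eqref{Eq:g_rows0} together with $\hat{\mathsf{g}}_0=[1\ 0]$ and $\hat{\mathsf{g}}_1=[1\ 1]$, the entry of $\mathsf{g}_j$ at position $k$ equals $\prod_{\ell=0}^{n-1}\hat{\mathsf{g}}_{b_{j,\ell}}[b_{k,\ell}]$, which is nonzero iff $b_{k,\ell}\le b_{j,\ell}$ for every $\ell$, equivalently iff $\mathcal{P}_1(\mathsf{b}_k)\subseteq \mathcal{P}_1(\mathsf{b}_j)$. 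In particular $|\mathrm{supp}(\mathsf{g}_j)|=2^{i_1(\mathsf{b}_j)}$, which recovers the $w=1$ term of \eqref{Eq:Hamm_of_sum0000}.

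Next I would extend this support description to arbitrary intersections: for any $w$-element subset $\mathcal{T}^w\subseteq \mathcal{T}$,
\begin{align*}
\bigcap_{j\in \mathcal{T}^w}\mathrm{supp}(\mathsf{g}_j) = \Bigl\{k\in\mathcal{N}:\mathcal{P}_1(\mathsf{b}_k)\subseteq \bigcap_{j\in\mathcal{T}^w}\mathcal{P}_1(\mathsf{b}_j)\Bigr\},
\end{align*}
and since $\bigcap_{j}\mathcal{P}_1(\mathsf{b}_j) = \mathcal{P}_1(\bar{\bigcap}_{j}\mathsf{b}_j)$, this set has cardinality $2^{i_1(\bar{\bigcap}_{j\in\mathcal{T}^w}\mathsf{b}_j)}$, which is exactly the exponent appearing in the inner sum of \eqref{Eq:Hamm_of_sum0000}.

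The last step converts from intersections to the symmetric difference. Since $\mathsf{g}_{\mathcal{T}}=\bigoplus_{j\in\mathcal{T}}\mathsf{g}_j$ is computed in $\mathbb{F}_2$, a coordinate $k$ lies in $\mathrm{supp}(\mathsf{g}_{\mathcal{T}})$ iff it belongs to an odd number of the $\mathrm{supp}(\mathsf{g}_j)$. I would then establish the identity
\begin{align*}
|\mathrm{supp}(\mathsf{g}_{\mathcal{T}})| = \sum_{w=1}^{|\mathcal{T}|}(-2)^{w-1}\sum_{\mathcal{T}^w\subseteq\mathcal{T},\,|\mathcal{T}^w|=w}\Bigl|\bigcap_{j\in\mathcal{T}^w}\mathrm{supp}(\mathsf{g}_j)\Bigr|
\end{align*}
by the pointwise binomial computation $\sum_{w=1}^{m}(-2)^{w-1}\binom{m}{w}=\tfrac12\bigl(1-(1-2)^m\bigr)=\mathbb{I}[m \text{ odd}]$: a coordinate in exactly $m$ of the supports is counted $\binom{m}{w}$ times at level $w$, and the total over $w$ is $1$ iff $m$ is odd. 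Substituting the intersection cardinalities from the preceding paragraph yields \eqref{Eq:Hamm_of_sum0000}.

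The main obstacle is really the first step, anchoring supports and their intersections to bit-wise AND through the Kronecker-product structure; once that is in hand, the rest is a combinatorial identity independent of coding theory. An equivalent and slightly more mechanical route I would keep in reserve is to use $\mathbb{I}[k\in \mathrm{supp}(\mathsf{g}_{\mathcal{T}})] = \tfrac12\bigl(1-\prod_{j\in\mathcal{T}}(1-2\,\mathbb{I}[k\in\mathrm{supp}(\mathsf{g}_j)])\bigr)$, expand the product, and sum over $k$; this sidesteps the symmetric-difference identity but leads to the same final expression.
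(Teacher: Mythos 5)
Your proposal is correct and complete. Note that the paper itself does not prove this statement: Theorem~\ref{th:Hamm_of_sum0000} is quoted verbatim from \cite{ICC_row_merging} "for the sake of completeness," so there is no in-paper proof to compare against. Your argument is a valid self-contained derivation: the Kronecker structure \eqref{Eq:g_rows0} indeed gives $G_{j,k}=1$ iff $\mathcal{P}_1(\mathsf{b}_k)\subseteq\mathcal{P}_1(\mathsf{b}_j)$, hence $\bigl|\bigcap_{j\in\mathcal{T}^w}\mathrm{supp}(\mathsf{g}_j)\bigr|=2^{i_1(\bar{\cap}_{j\in\mathcal{T}^w}\mathsf{b}_j)}$, and the pointwise count $\sum_{w=1}^{m}(-2)^{w-1}\binom{m}{w}=\tfrac12\bigl(1-(-1)^m\bigr)$ correctly reduces \eqref{Eq:Hamm_of_sum0000} to the statement that a coordinate survives the $\mathbb{F}_2$ sum iff it lies in an odd number of supports. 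Your route is also consistent with the machinery the paper does use elsewhere: the identity \eqref{eq:u_v}, $i_1(\mathsf{u}\oplus\mathsf{v})=i_1(\mathsf{u})+i_1(\mathsf{v})-2\,i_1(\mathsf{u}\,\bar{\cap}\,\mathsf{v})$, is exactly the two-set special case of your parity expansion, and an alternative proof of the theorem would iterate it inductively over $|\mathcal{T}|$; your single inclusion--exclusion step (or the reserve route via expanding $\prod_j(1-2\,\mathbb{I}[k\in\mathrm{supp}(\mathsf{g}_j)])$) gets the same result more directly.
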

	
	
	\begin{corollary}\label{cor:thm_1}
		Let $\Pi:\mathbb{F}_2^{n}\mapsto \mathbb{F}_2^{n}$ be a permutation on binary representations of $j\in \mathcal{N}$ and $\tilde{\mathcal{T}}$ be the index set obtained by applying permutation $\Pi$ to the binary representations of elements of $\mathcal{T}$: $\mathsf{b}_{\tilde{j}}=\Pi(\mathsf{b}_{j})$, $j\in \mathcal{T}$ and $\tilde{j}\in \tilde{\mathcal{T}}$. 
		Then, 
		\begin{align}
			i_1(\mathsf{g}_{\mathcal{T}})=i_1(\mathsf{g}_{\tilde{\mathcal{T}}})
		\end{align}
	\end{corollary}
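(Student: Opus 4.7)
The plan is to deduce the corollary directly from Theorem~\ref{th:Hamm_of_sum0000} by observing that the right-hand side of \eqref{Eq:Hamm_of_sum0000} depends on the indices in $\mathcal{T}$ only through the Hamming weights of element-wise ANDs of their binary representations, and both of these quantities are invariant under a simultaneous permutation of the bit positions of all $\mathsf{b}_j$'s involved.

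First, I would apply Theorem~\ref{th:Hamm_of_sum0000} to both $\mathsf{g}_{\mathcal{T}}$ and $\mathsf{g}_{\tilde{\mathcal{T}}}$ to get two expansions of the form $\sum_{w=1}^{|\mathcal{T}|}(-2)^{w-1}\sum_{\mathcal{T}^w}2^{i_1(\bar{\cap}_{j\in \mathcal{T}^w}\mathsf{b}_j)}$, one summed over $w$-subsets of $\mathcal{T}$, the other over $w$-subsets of $\tilde{\mathcal{T}}$; note that $|\tilde{\mathcal{T}}|=|\mathcal{T}|$ since $\Pi$ is a bijection on $\mathbb{F}_2^{n}$ (any permutation of bit positions is injective), so the outer sums have the same range. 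Next, I would set up the natural bijection $\varphi:\mathcal{T}^w \mapsto \tilde{\mathcal{T}}^w$ defined by $\varphi(\mathcal{T}^w)=\{\tilde{j}:\mathsf{b}_{\tilde{j}}=\Pi(\mathsf{b}_j),\ j\in\mathcal{T}^w\}$, which is a bijection between $w$-subsets of $\mathcal{T}$ and $w$-subsets of $\tilde{\mathcal{T}}$.

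The heart of the argument is the identity
\begin{align}
\bar{\bigcap}_{\tilde{j}\in \varphi(\mathcal{T}^w)}\mathsf{b}_{\tilde{j}} = \Pi\Big(\bar{\bigcap}_{j\in \mathcal{T}^w}\mathsf{b}_j\Big),
\end{align}
which follows because $\Pi$, acting as a permutation of the $n$ bit positions, commutes with the element-wise AND: the $\ell$-th bit of either side equals $1$ iff the $\Pi^{-1}(\ell)$-th bit of every $\mathsf{b}_j$ with $j\in\mathcal{T}^w$ equals $1$. Since $\Pi$ merely reorders bit positions, it preserves the number of $1$'s of any vector, so $i_1\big(\bar{\cap}_{\tilde{j}\in \varphi(\mathcal{T}^w)}\mathsf{b}_{\tilde{j}}\big)=i_1\big(\bar{\cap}_{j\in \mathcal{T}^w}\mathsf{b}_j\big)$, and each corresponding term in the two expansions coincides.

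Summing the equal terms over $w$ and over the matched $w$-subsets yields $i_1(\mathsf{g}_{\mathcal{T}})=i_1(\mathsf{g}_{\tilde{\mathcal{T}}})$. I do not anticipate a genuine obstacle here; the only care required is in checking that $\Pi$ commutes with the element-wise AND (immediate from the bit-position-wise definition) and that the bijection $\varphi$ is well defined, i.e., that applying $\Pi$ to distinct binary strings yields distinct binary strings, which holds because $\Pi$ is a permutation on $\mathbb{F}_2^{n}$.
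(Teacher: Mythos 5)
Your proposal is correct and follows essentially the same route as the paper: both invoke Theorem~\ref{th:Hamm_of_sum0000} and observe that, since a bit-position permutation commutes with the element-wise AND and preserves Hamming weight, the quantity $i_1\big(\bar{\cap}_{j\in\mathcal{T}^w}\mathsf{b}_j\big)$ is unchanged for every subset $\mathcal{T}^w$, so the expansion is term-by-term identical. Your write-up merely makes explicit the subset bijection and the commutation step that the paper leaves implicit.
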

	
	\begin{proof} The number of common $1-$bits will not change with permutation for any subset $\mathcal{T}^{w}\subset \mathcal{T}$, i.e.,
		\begin{align}
			i_1(\bar{\cap}_{j\in\mathcal{T}^{w} }\mathsf{b}_{j})&=i_1(\bar{\cap}_{j\in\mathcal{T}^{w} }\Pi(\mathsf{b}_{j}))=i_1(\bar{\cap}_{\tilde{j}\in\Pi_{\mathcal{T}^{w}} }\mathsf{b}_{\tilde{j}})=i_1(\bar{\cap}_{\tilde{j}\in\tilde{\mathcal{T}}^{w} }\mathsf{b}_{\tilde{j}})
		\end{align}
		then, by \eqref{Eq:Hamm_of_sum0000}, the Hamming weight does not change.
	\end{proof}

	The following theorem is also used to obtain subsequent results of this paper. It basically states that for any given set of rows of the polarization matrix, the Hamming weight of the sum of all rows is lower bounded by the maximum Hamming weight of the sum of a subset of rows whose binary representations are zero at the corresponding binary indices.
	
	
	\begin{theorem}\label{thm:main_row_merg0}
		For any given $\mathcal{T}\subseteq \mathcal{N}$ the Hamming weight of $\mathsf{g}_{\mathcal{T}}$ is lower bounded by
		\begin{align}
			i_1(\mathsf{g}_{\mathcal{T}})\geq \max_{\ell\in [0,n)}i_1(\mathsf{g}_{\mathcal{T}^{0}_{\ell}})
		\end{align}
		where $\mathcal{T}^{0}_{\ell}:=\{k:b_{k,\ell}=0, k\in \mathcal{T}\}$.
	\end{theorem}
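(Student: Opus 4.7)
My plan is to prove the bound for each single $\ell \in [0,n)$ separately and then take the maximum; it therefore suffices to show $i_1(\mathsf{g}_{\mathcal{T}}) \geq i_1(\mathsf{g}_{\mathcal{T}_\ell^0})$ for an arbitrary fixed $\ell$. The key structural ingredient I will exploit is the column partition $\mathcal{N} = \mathcal{M}_\ell \sqcup \mathcal{M}_\ell^c$ together with identity \eqref{eq:proj_v1}, which says that on $\mathcal{M}_\ell^c$ every row $\mathsf{g}_j$ equals its projection $\mathsf{g}_j^\ell$ regardless of $b_{j,\ell}$, whereas on $\mathcal{M}_\ell$ it equals $\mathsf{g}_j^\ell$ when $b_{j,\ell}=1$ and vanishes when $b_{j,\ell}=0$. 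No appeal to Theorem \ref{th:Hamm_of_sum0000} or Corollary \ref{cor:thm_1} is needed; the argument is purely geometric on the recursive row structure implied by \eqref{Eq:reg_div}.

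First, I partition the index set as $\mathcal{T} = \mathcal{T}_\ell^0 \sqcup \mathcal{T}_\ell^1$, where $\mathcal{T}_\ell^1 := \mathcal{T} \setminus \mathcal{T}_\ell^0$, and set $A := \bigoplus_{k \in \mathcal{T}_\ell^0} \mathsf{g}_k^\ell$ and $B := \bigoplus_{k \in \mathcal{T}_\ell^1} \mathsf{g}_k^\ell$. Linearity of the componentwise XOR restricted to each coordinate block, combined with \eqref{eq:proj_v1}, gives $\mathsf{g}_{\mathcal{T},\mathcal{M}_\ell^c} = A \oplus B$ (both subsets of $\mathcal{T}$ contribute their projection) and $\mathsf{g}_{\mathcal{T},\mathcal{M}_\ell} = B$ (only $\mathcal{T}_\ell^1$ contributes). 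Since $\mathcal{M}_\ell$ and $\mathcal{M}_\ell^c$ are disjoint, the total Hamming weight splits as $i_1(\mathsf{g}_{\mathcal{T}}) = i_1(A \oplus B) + i_1(B)$.

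Next I apply the Hamming-weight triangle inequality, $i_1(A \oplus B) \geq i_1(A) - i_1(B)$, to conclude $i_1(\mathsf{g}_{\mathcal{T}}) \geq i_1(A)$. It remains to identify $i_1(A)$ with $i_1(\mathsf{g}_{\mathcal{T}_\ell^0})$: for every $k \in \mathcal{T}_\ell^0$ we have $b_{k,\ell}=0$, so by \eqref{eq:proj_v1} the row $\mathsf{g}_k$ is zero on $\mathcal{M}_\ell$ and equals $\mathsf{g}_k^\ell$ on $\mathcal{M}_\ell^c$; hence $\mathsf{g}_{\mathcal{T}_\ell^0}$ agrees with $A$ on $\mathcal{M}_\ell^c$ and vanishes on $\mathcal{M}_\ell$, giving $i_1(\mathsf{g}_{\mathcal{T}_\ell^0}) = i_1(A)$. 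Maximizing over $\ell$ completes the proof.

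The main obstacle is really bookkeeping: one must carefully check from \eqref{eq:proj_v1} how each of the four cases, bit value in $\{0,1\}$ crossed with coordinate block in $\{\mathcal{M}_\ell,\mathcal{M}_\ell^c\}$, contributes so that precisely the same vector $B$ appears on the $\mathcal{M}_\ell$ block and inside the sum $A \oplus B$ on the $\mathcal{M}_\ell^c$ block. Once this two-block decomposition is laid out, the triangle inequality finishes the argument in a single line, and the bound holds coordinate-wise before any maximization.
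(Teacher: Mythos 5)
Your proof is correct and takes essentially the same route as the paper's own argument: the identical block decomposition over $\mathcal{M}_{\ell}$ and $\mathcal{M}_{\ell}^{c}$ giving $i_1(\mathsf{g}_{\mathcal{T}})=i_1(A\oplus B)+i_1(B)$ with $A=\mathsf{g}^{\ell}_{\mathcal{T}^{0}_{\ell}}$ and $B=\mathsf{g}^{\ell}_{\mathcal{T}\setminus\mathcal{T}^{0}_{\ell}}$, followed by the same final identification $i_1(A)=i_1(\mathsf{g}_{\mathcal{T}^{0}_{\ell}})$. Your triangle inequality $i_1(A\oplus B)\geq i_1(A)-i_1(B)$ is simply a rearrangement of the paper's inequality $i_1(\mathsf{u}\oplus\mathsf{v})+i_1(\mathsf{v})\geq i_1(\mathsf{u})$ in \eqref{eq:u_v}, so the two proofs coincide in substance.
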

	
	\begin{proof} 
		For any $\mathsf{u},\mathsf{v}\in \mathbb{F}^{1\times N}_2$, we have
		\begin{align}\label{eq:u_v}
			i_1(\mathsf{u} \oplus \mathsf{v})+i_1(\mathsf{v})&= i_1(\mathsf{u})+i_1(\mathsf{v})-2\cdot i_1(\mathsf{u} \bar{\cap} \mathsf{v})+i_1(\mathsf{v})\nonumber \\
			&= i_1(\mathsf{u})+2\cdot ( \underbrace{i_1(\mathsf{v})-i_1(\mathsf{u} \bar{\cap} \mathsf{v})}_{\geq 0 }) \geq i_1(\mathsf{u}).
		\end{align}
		Then, note that for any $j \in \mathcal{N} $
		\begin{align}\label{eq:in_proof3}
			i_1(\mathsf{g}_j)=\begin{cases}
				i_1(\mathsf{g}^{\ell}_j) & \text{if } b_{j,\ell}=0\\
				2\cdot i_1(\mathsf{g}^{\ell}_j) & \text{if } b_{j,\ell}=1
			\end{cases}
		\end{align}
		for any $\ell \in [0,n)$ due to \eqref{eq:proj_v0} and \eqref{eq:proj_v1}. Therefore, for any $\ell\in [0,n)$ we can write 
		\begin{align}
			i_1(\mathsf{g}_{\mathcal{T}}) &=(\mathsf{g}_{\mathcal{T}}|\mathcal{M}^{c}_{\ell})+(\mathsf{g}_{\mathcal{T}}|\mathcal{M}_{\ell}) \overset{(a)}{=} i_1(\bigoplus_{j\in \mathcal{T}}\mathsf{g}^{\ell}_{j})+i_1(\bigoplus_{j\in \mathcal{T}}\mathsf{g}^{\ell}_{j}\mathbb{I}\{b_{j,\ell}=1\})\nonumber \\
			&=i_1(\bigoplus_{j\in \mathcal{T}}\mathsf{g}^{\ell}_{j}\mathbb{I}\{b_{j,\ell}=0\}\bigoplus_{j\in \mathcal{T}}\mathsf{g}^{\ell}_{j}\mathbb{I}\{b_{j,\ell}=1\})+i_1(\bigoplus_{j\in \mathcal{T}}\mathsf{g}^{\ell}_{j}\mathbb{I}\{b_{j,\ell}=1\})\nonumber \\
			&\overset{(b)}{\geq} i_1(\bigoplus_{j\in \mathcal{T}}\mathsf{g}^{\ell}_{j}\mathbb{I}\{b_{j,\ell}=0\})
			= i_1(\bigoplus_{j\in \mathcal{T}^{0}_{\ell}}\mathsf{g}^{\ell}_{j}) 
			\overset{(c)}{=} i_1(\bigoplus_{j\in \mathcal{T}^{0}_{\ell}}\mathsf{g}_{j})
		\end{align}
		where (a) is due to \eqref{eq:proj_v0} and \eqref{eq:proj_v1}, (b) is due to \eqref{eq:u_v} and (c) is due to \eqref{eq:in_proof3}.
	\end{proof}
	


	\begin{theorem}\label{thm:main_min_dis_inc_three_rows}
		Let $\mathbf{C}$ be a polar-like code with information set $\mathcal{A}=\bigcup_{p=\ell+1}^{n}\mathcal{N}_p$. and $(i,j,k)$ be a triple such that $(i,j)\in \mathcal{N}_{\ell}, \ell \geq 2, \; k\in \mathcal{N}_2\text{ and } i_1(\mathsf{b}_i\bar{\cap}\mathsf{b}_j)=i_1(\mathsf{b}_i\bar{\cap}\mathsf{b}_k)=i_1(\mathsf{b}_j\bar{\cap}\mathsf{b}_k)=0$. Moreover, let $\mathbf{\bar C}$ be another polar-like code that encodes an additional information bit by $\mathsf{g}_i\oplus\mathsf{g}_j\oplus\mathsf{g}_k $, i.e.
		\begin{align}
			\bar{\mathbf{C}}:=\{\mathbf{C}\}\cup \{\mathsf{c}: \mathsf{c}=\mathsf{g}_{\{i,j,k\}}\oplus \mathsf{g}_{\mathcal{T}}, \mathcal{T}\subseteq\mathcal{A}\}.
		\end{align}
		Then, the minimum distance of $\mathbf{\bar C}$ is the same as $\mathbf{C}$, i.e.
		\begin{align}
			d(\bar{\mathbf{C}})&=\min\{d(\mathbf{C}), \min_{\mathcal{T}\subseteq \mathcal{A}}	i_1(\mathsf{g}_{\{i,j,k\}}\oplus \mathsf{g}_{\mathcal{T}})\}\nonumber \\&=d(\mathbf{C})=2^{\ell+1} 
		\end{align}
	\end{theorem}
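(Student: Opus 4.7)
Since $\ell\geq 2$, the indices $i,j\in\mathcal{N}_\ell$ and $k\in\mathcal{N}_2$ all lie outside $\mathcal{A}=\bigcup_{p=\ell+1}^{n}\mathcal{N}_p$, so every codeword of $\bar{\mathbf{C}}\setminus\mathbf{C}$ can be written as $\mathsf{g}_{\mathcal{T}'}$ with $\mathcal{T}'=\{i,j,k\}\sqcup\mathcal{T}$ and $\mathcal{T}\subseteq\mathcal{A}$. Combined with $d(\mathbf{C})=2^{\ell+1}$ from \eqref{eq:low_bound_d}, the problem reduces to showing $i_1(\mathsf{g}_{\mathcal{T}'})\geq 2^{\ell+1}$ for every $\mathcal{T}\subseteq\mathcal{A}$. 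The base case $\mathcal{T}=\emptyset$ follows by a direct evaluation of Theorem~\ref{th:Hamm_of_sum0000}: pairwise disjointness of $\mathsf{b}_i,\mathsf{b}_j,\mathsf{b}_k$ kills every intersection of two or more binary representations, collapsing \eqref{Eq:Hamm_of_sum0000} to
\begin{align*}
i_1(\mathsf{g}_{\{i,j,k\}})=(2^\ell+2^\ell+2^2)-2(1+1+1)+4\cdot 1=2^{\ell+1}+2,
\end{align*}
which exceeds the target by a slack of $2$.

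For $\mathcal{T}\neq\emptyset$ I would invoke Theorem~\ref{thm:main_row_merg0} with a position $m\in\mathcal{P}_1(\mathsf{b}_k)$. Because $\mathsf{b}_i,\mathsf{b}_j$ vanish at $m$ by disjointness, $(\mathcal{T}')^0_m=\{i,j\}\cup\mathcal{T}^0_m$, removing $k$ while retaining $i,j$. To avoid the shortfall of pair merging alone---recall $i_1(\mathsf{g}_{\{i,j\}})=2^{\ell+1}-2$---I would track the slack term appearing in the proof of Theorem~\ref{thm:main_row_merg0}. Writing $\mathsf{g}_{\mathcal{T}'}=(\mathsf{u}\oplus\mathsf{v},\mathsf{v})$ with $\mathsf{u}=\bigoplus_{s\in(\mathcal{T}')^0_m}\mathsf{g}_s^m$ and $\mathsf{v}=\bigoplus_{s\in(\mathcal{T}')^1_m}\mathsf{g}_s^m$, identity \eqref{eq:u_v} yields
\begin{align*}
i_1(\mathsf{g}_{\mathcal{T}'})=i_1(\mathsf{u})+2\bigl(i_1(\mathsf{v})-i_1(\mathsf{u}\,\bar{\cap}\,\mathsf{v})\bigr).
\end{align*}
The key geometric observation, readable from \eqref{eq:proj_v0}--\eqref{eq:subset_proj1}, is that $\mathsf{g}_k^m$ has weight $2$ and its two support points are disjoint from the support of $\mathsf{g}_i^m\oplus\mathsf{g}_j^m$: the origin cancels in the pair sum, and the other support point of $\mathsf{g}_k^m$ corresponds to a coordinate where the pair vanishes because the underlying bit lies outside $\mathcal{P}_1(\mathsf{b}_i)\cup\mathcal{P}_1(\mathsf{b}_j)$. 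Consequently, when $\mathcal{T}^1_m=\emptyset$ (so $\mathsf{v}=\mathsf{g}_k^m$), the slack term is exactly $+2$ and the bound closes.

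The principal obstacle is the case $\mathcal{T}^1_m\neq\emptyset$, where the $\mathsf{g}_t^m$ with $b_{t,m}=1$ can cancel part of $\mathsf{g}_k^m$ inside $\mathsf{v}$ and erase that crucial slack. To handle it I would induct on $|\mathcal{T}|$ (or on $n$) via a careful choice of $m$: if some $t\in\mathcal{T}$ has a $1$-bit at a position $m$ outside $\mathcal{P}_1(\mathsf{b}_i)\cup\mathcal{P}_1(\mathsf{b}_j)\cup\mathcal{P}_1(\mathsf{b}_k)$, applying Theorem~\ref{thm:main_row_merg0} at such an $m$ preserves the entire triple $\{i,j,k\}$ in $(\mathcal{T}')^0_m$ and strictly shrinks $\mathcal{T}$, reducing to a smaller instance of the same problem. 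In the residual situation, in which every $t\in\mathcal{T}$ satisfies $\mathcal{P}_1(\mathsf{b}_t)\subseteq\mathcal{P}_1(\mathsf{b}_i)\cup\mathcal{P}_1(\mathsf{b}_j)\cup\mathcal{P}_1(\mathsf{b}_k)$---a set of size only $2\ell+2$---one is left with a bounded combinatorial problem that can be finished by direct evaluation via Theorem~\ref{th:Hamm_of_sum0000}, exploiting the a priori bounds $i_1(\mathsf{b}_i\,\bar{\cap}\,\mathsf{b}_t)\leq\ell$, $i_1(\mathsf{b}_j\,\bar{\cap}\,\mathsf{b}_t)\leq\ell$, $i_1(\mathsf{b}_k\,\bar{\cap}\,\mathsf{b}_t)\leq 2$ imposed by the disjointness hypotheses, together with Corollary~\ref{cor:thm_1} to reduce to a canonical placement of the $1$-bits of $\mathsf{b}_i,\mathsf{b}_j,\mathsf{b}_k$.
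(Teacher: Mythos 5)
Your reduction steps are sound and mirror the paper's: the base case $i_1(\mathsf{g}_{\{i,j,k\}})=2^{\ell+1}+2$ via Theorem~\ref{th:Hamm_of_sum0000} is correct, and eliminating, via Theorem~\ref{thm:main_row_merg0}, any $t\in\mathcal{T}$ having a $1$-bit outside $\mathcal{P}_1(\mathsf{b}_i\bar{\cup}\mathsf{b}_j\bar{\cup}\mathsf{b}_k)$ is exactly the paper's first move. The gap lies in what remains. First, your claim that when $\mathcal{T}^1_m=\emptyset$ (for $m\in\mathcal{P}_1(\mathsf{b}_k)$) the slack term in \eqref{eq:u_v} is ``exactly $+2$'' is unjustified: there $\mathsf{v}=\mathsf{g}_k^m$ has weight $2$, but $\mathsf{u}=\mathsf{g}_i^m\oplus\mathsf{g}_j^m\oplus\bigoplus_{t\in\mathcal{T}^0_m}\mathsf{g}_t^m$ contains the $\mathcal{T}^0_m$ rows, which can place $1$'s on both support coordinates of $\mathsf{g}_k^m$ (for instance whenever some $t$ satisfies $\mathcal{P}_1(\mathsf{b}_t)\supset\mathcal{P}_1(\mathsf{b}_k)$, or through parity cancellations at the all-zero column). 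In that event the slack $2\bigl(i_1(\mathsf{v})-i_1(\mathsf{u}\,\bar{\cap}\,\mathsf{v})\bigr)$ can be $0$, and the only fallback you have is the pair estimate $i_1(\mathsf{g}_{\{i,j\}}\oplus\mathsf{g}_{\mathcal{T}'})\geq 2^{\ell+1}-2$, which misses the target by $2$. This is precisely why the paper splits the surviving set into $\hat{\mathcal{T}}_{k,1}$ (supersets of $\mathcal{P}_1(\mathsf{b}_k)$), $\hat{\mathcal{T}}_{k,2}$ (one common bit with $\mathsf{b}_k$) and the rest, and runs a separate projection argument onto $\{p_1,p_2\}=\mathcal{P}_1(\mathsf{b}_k)$ for each configuration.

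Second, the ``residual situation'' you defer to a finite check is not bounded: the confinement $\mathcal{P}_1(\mathsf{b}_t)\subseteq\mathcal{P}_1(\mathsf{b}_i\bar{\cup}\mathsf{b}_j\bar{\cup}\mathsf{b}_k)$ leaves a set of $2\ell+2$ positions, so the family of admissible $\mathcal{T}$ grows exponentially with $\ell$, and no canonical placement via Corollary~\ref{cor:thm_1} collapses it to finitely many cases independent of $(n,\ell)$. This residual case is the actual core of the theorem. The paper disposes of it with a dedicated preliminary result (Theorem~\ref{thm:main_min_dis_inc}): for disjoint $(i,j)\in\mathcal{N}_\ell$ and any $\mathcal{T}\subseteq\cup_{p\geq\ell+1}\mathcal{N}_p$, $i_1(\mathsf{g}_i\oplus\mathsf{g}_j\oplus\mathsf{g}_{\mathcal{T}})\geq 2^{\ell+1}-2$, whose proof requires its own case analysis (whether some $t$ covers neither $\mathcal{P}_1(\mathsf{b}_i)$ nor $\mathcal{P}_1(\mathsf{b}_j)$, versus the covering case with the $2^\ell$-fold partition and the counting of the number $\alpha$ of odd-parity blocks). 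You implicitly rely on a statement of this strength (``the shortfall of pair merging'') without proving it, and your outline contains no substitute for it; as written, the argument does not close the $2^{\ell+1}$ bound whenever $\mathcal{T}$ contains rows whose supports cover $\mathcal{P}_1(\mathsf{b}_k)$ or interleave with $\mathcal{P}_1(\mathsf{b}_i)\cup\mathcal{P}_1(\mathsf{b}_j)$.
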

	
	\begin{proof}
		The proof is given in Appendix~\ref{app:thm3}.
	\end{proof}
	
	
	\subsection{Merging Three Rows with Common 1-bit Positions}
	
	The following theorem is a generalization of Theorem~\ref{thm:main_min_dis_inc_three_rows} and states the sufficient conditions on the rows of a triple with some common $1-$bit positions in their binary representations, to be merged together such that the minimum distance of the underlying RM code is preserved.
	
	\begin{theorem}\label{thm:main_min_dis_inc_with_intrsctn_three_rows}
		Let $\mathbf{C}$ be a polar-like code with information set $\mathcal{A}=\bigcup_{p=\ell+1}^{n}\mathcal{N}_p$. and $(i,j,k)$ be a triple such that $\mathcal{P}_1(\mathsf{b}_i\bar{\cap}\mathsf{b}_j)=\mathcal{P}_1(\mathsf{b}_i\bar{\cap}\mathsf{b}_k)=\mathcal{P}_1(\mathsf{b}_j\bar{\cap}\mathsf{b}_k)\neq \emptyset$, $(i,j)\in \mathcal{N}_{\ell}$, $k\in \mathcal{N}_{i_1(\mathsf{b}_i\bar{\cap}\mathsf{b}_j)+2}$, $\ell \geq i_1(\mathsf{b}_k)$. Let the code $\bar{\mathbf{C}}$ be:
		\begin{align}
			\bar{\mathbf{C}}:=\{\mathbf{C}\}\cup \{\mathsf{c}: \mathsf{c}=\mathsf{g}_{\{i,j,k\}}\oplus \mathsf{g}_{\mathcal{T}},\, \mathcal{T} \subseteq \mathcal{A}\}
		\end{align}
		Then,
		\begin{align}
			d(\bar{\mathbf{C}})=d(\mathbf{C})=2^{\ell+1} 
		\end{align}
	\end{theorem}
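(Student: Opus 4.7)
The plan is to show that for every $\mathcal{T}\subseteq\mathcal{A}$, the codeword $\mathsf{g}_{\{i,j,k\}}\oplus\mathsf{g}_{\mathcal{T}}$ has Hamming weight at least $d(\mathbf{C})=2^{\ell+1}$. Since $i,j\in\mathcal{N}_{\ell}$ and $k\in\mathcal{N}_{s+2}$ with $\ell\ge s+2$, where $s:=i_{1}(\mathsf{b}_{i}\bar{\cap}\mathsf{b}_{j})$, none of $i,j,k$ lies in $\mathcal{A}$, so by linear independence of the rows of $\mathsf{G}$ the codeword is always nonzero, and combined with the weight bound this gives $d(\bar{\mathbf{C}})=2^{\ell+1}$. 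Let $\mathcal{I}:=\mathcal{P}_{1}(\mathsf{b}_{i}\bar{\cap}\mathsf{b}_{j})$, which by hypothesis equals each pairwise intersection, with $|\mathcal{I}|=s\ge 1$. I would split on whether there exists $\ell'\in\mathcal{I}$ with $\mathcal{T}^{0}_{\ell'}\neq\emptyset$.

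In the first (easy) case, pick any such $\ell'$. Applying Theorem~\ref{thm:main_row_merg0} to the row set $\{i,j,k\}\cup\mathcal{T}$ at position $\ell'$ yields $i_{1}(\mathsf{g}_{\{i,j,k\}\cup\mathcal{T}})\ge i_{1}(\mathsf{g}_{(\{i,j,k\}\cup\mathcal{T})^{0}_{\ell'}})$. Because $b_{i,\ell'}=b_{j,\ell'}=b_{k,\ell'}=1$, the right-hand index set reduces to $\mathcal{T}^{0}_{\ell'}\subseteq\mathcal{A}$, so $\mathsf{g}_{\mathcal{T}^{0}_{\ell'}}$ is a nonzero codeword of $\mathbf{C}$ whose weight is at least $2^{\ell+1}$, which closes this case.

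The main obstacle is the complementary case, in which every $t\in\mathcal{T}$ satisfies $\mathcal{I}\subseteq\mathcal{P}_{1}(\mathsf{b}_{t})$ and Theorem~\ref{thm:main_row_merg0} alone is useless. The idea is to partition $\mathcal{N}$ into the $2^{s}$ blocks indexed by $\mathcal{B}_{0}\subseteq\mathcal{I}$ of the form $\cap_{\ell'\in\mathcal{B}_{0}}\mathcal{M}_{\ell'}\cap_{\hat{\ell}\in\mathcal{I}\setminus\mathcal{B}_{0}}\mathcal{M}^{c}_{\hat{\ell}}$ and invoke~\eqref{eq:subset_proj1}. Since every row $\mathsf{g}_{t}$ with $t\in\{i,j,k\}\cup\mathcal{T}$ has all its $\mathcal{I}$-bits equal to $1$, its restriction to every block equals the same projection $\mathsf{g}_{t}^{\mathcal{I}}$. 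Consequently $\mathsf{g}_{\{i,j,k\}\cup\mathcal{T}}$ carries the same vector
\begin{align*}
\mathsf{v}:=\mathsf{g}_{i}^{\mathcal{I}}\oplus\mathsf{g}_{j}^{\mathcal{I}}\oplus\mathsf{g}_{k}^{\mathcal{I}}\oplus\bigoplus_{t\in\mathcal{T}}\mathsf{g}_{t}^{\mathcal{I}}
\end{align*}
in each of its $2^{s}$ blocks, so $i_{1}(\mathsf{g}_{\{i,j,k\}\cup\mathcal{T}})=2^{s}\,i_{1}(\mathsf{v})$, and it remains to prove $i_{1}(\mathsf{v})\ge 2^{\ell-s+1}$.

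This last step reduces to Theorem~\ref{thm:main_min_dis_inc_three_rows} in the projected $(n-s)$-bit space. After deleting the $\mathcal{I}$-bits from the binary indices, the projections $\tilde{\mathsf{b}}_{i},\tilde{\mathsf{b}}_{j},\tilde{\mathsf{b}}_{k}$ have weights $\ell-s,\ell-s,2$ with pairwise-disjoint $1$-positions, while $i_{1}(\tilde{\mathsf{b}}_{t})\ge(\ell-s)+1$ for every $t\in\mathcal{T}$. The hypothesis $\ell\ge s+2$ ensures $\ell-s\ge 2$, so $(\tilde{i},\tilde{j},\tilde{k})$ satisfies the hypotheses of Theorem~\ref{thm:main_min_dis_inc_three_rows} with parameters $(n-s,\ell-s)$. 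Therefore $\mathsf{v}$ lies in an extended polar-like code of minimum distance $2^{\ell-s+1}$, and is nonzero by linear independence of the projected rows, which gives $i_{1}(\mathsf{v})\ge 2^{\ell-s+1}$ and hence $i_{1}(\mathsf{g}_{\{i,j,k\}\cup\mathcal{T}})\ge 2^{\ell+1}$, as required.
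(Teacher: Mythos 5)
Your proof is correct, and it shares the paper's central reduction -- partition the coordinates according to the common $1$-bit positions $\mathcal{W}=\mathcal{P}_1(\mathsf{b}_i\bar{\cap}\mathsf{b}_j)$ and bound the projected sum through Theorem~\ref{thm:main_min_dis_inc_three_rows} -- but you organize the case analysis differently, and in a way that is arguably leaner. The paper first strips from $\mathcal{T}$ every row having a $1$-bit outside $\mathcal{P}_1(\mathsf{b}_i\bar{\cup}\mathsf{b}_j\bar{\cup}\mathsf{b}_k)$ by repeated use of Theorem~\ref{thm:main_row_merg0}, and then lower-bounds each of the $2^{|\mathcal{W}|}$ block terms separately (the per-block sub-sums may differ, since a surviving row may be $0$ on some of the $\mathcal{W}$ positions). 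You instead dichotomize on whether some $t\in\mathcal{T}$ has a $0$ at a position of $\mathcal{I}=\mathcal{W}$: if yes, a single application of Theorem~\ref{thm:main_row_merg0} at that position drops $i,j,k$ (and all other such rows) entirely and leaves a nonzero codeword of $\mathbf{C}$, which already has weight at least $2^{\ell+1}$ -- a use of the lemma the paper does not make; if no, every involved row restricts to the same projection on each of the $2^{s}$ blocks by \eqref{eq:subset_proj1}, giving the exact identity $i_1(\mathsf{g}_{\{i,j,k\}}\oplus\mathsf{g}_{\mathcal{T}})=2^{s}\, i_1(\mathsf{v})$ and a single invocation of Theorem~\ref{thm:main_min_dis_inc_three_rows} in the $(n-s)$-bit space (since that theorem admits arbitrary $\mathcal{T}\subseteq\mathcal{A}$, it also absorbs the paper's preliminary stripping step, which your route never needs). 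The hypotheses transfer correctly: the projected triple has weights $\ell-s,\ell-s,2$ with disjoint supports, $\ell-s\geq 2$ by $\ell\geq i_1(\mathsf{b}_k)=s+2$, and every projected $\mathcal{T}$-row has weight at least $\ell-s+1$ because its $\mathcal{I}$-bits are all $1$. The only point worth making explicit is the injectivity of $t\mapsto\tilde t$ on $\{i,j,k\}\cup\mathcal{T}$ in your second case (immediate, since the deleted $\mathcal{I}$-bits are identically $1$), but nothing in the bound actually depends on it, as the weight estimate follows directly from the statement of Theorem~\ref{thm:main_min_dis_inc_three_rows}.
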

	
	\begin{proof}
		Since 
		\begin{align}
			d(\bar{\mathbf{C}})&=\min\{d(\mathbf{C}), \min_{\mathcal{T}\subseteq \mathcal{A}}	i_1(\mathsf{g}_{\{i,j,k\}}\oplus \mathsf{g}_{\mathcal{T}})\} %
		\end{align}
		it is sufficient to prove the following statement 
		\begin{align}\label{eq:eqt_thm40}
			i_1(\mathsf{g}_{\{i,j,k\}}\oplus \mathsf{g}_{\mathcal{T}})&\geq 2^{\ell+1}, \; \; \forall \mathcal{T}\subseteq \mathcal{A}.
		\end{align}
		
		For any $\mathcal{T}\subseteq \mathcal{A}$, the index set can be divided into two subsets such that
		\begin{align}\label{eq:in_thm_subset01}
			\tilde{\mathcal{T}}:=\{t:\mathcal{P}_1(\mathsf{b}_t)\cap\mathcal{P}_0(\mathsf{b}_i\bar{\cup}\mathsf{b}_j\bar{\cup}\mathsf{b}_k)\neq \emptyset, t\in \mathcal{T} \}
		\end{align}
		and $\hat{\mathcal{T}}=\mathcal{T}\setminus \tilde{\mathcal{T}}$.
		Then, 
		\begin{align}
			i_1(\mathsf{g}_{\{i,j,k\}}\oplus\mathsf{g}_{\hat{\mathcal{T}}}\oplus\mathsf{g}_{ \tilde{\mathcal{T}}})  &\overset{(a)}{\geq} \hspace{-0.10cm}\max_{p_0\in \mathcal{P}_0(\mathsf{b}_i\bar{\cup}\mathsf{b}_j\bar{\cup}\mathsf{b}_k)}\hspace{-0.10cm}i_1(\mathsf{g}_{\{i,j,k\}}\oplus \mathsf{g}_{\hat{\mathcal{T}}} \bigoplus_{t\in \tilde{\mathcal{T}}}\mathsf{g}_t \mathbb{I}\{b_{t,p_0}=0\})\nonumber\\
			&\overset{(b)}{\geq} \hspace{-0.20cm}\max_{p_1\in \mathcal{P}_0(\mathsf{b}_i\bar{\cup}\mathsf{b}_j\bar{\cup}\mathsf{b}_k)\setminus p_0}\hspace{-0.45cm} i_1(\mathsf{g}_{\{i,j,k\}}\oplus \mathsf{g}_{\hat{\mathcal{T}}}\bigoplus_{t\in \tilde{\mathcal{T}}}\mathsf{g}_t \mathbb{I}\{b_{t,p_0}={b_{t,p_1}=0}\})\nonumber \\
			&\hspace{0.15cm}\vdots \nonumber \\
			&\overset{(c)}{\geq} i_1(\mathsf{g}_{\{i,j,k\}}\oplus \mathsf{g}_{\hat{\mathcal{T}}} \bigoplus_{t\in \tilde{\mathcal{T}}}\mathsf{g}_t \mathbb{I}\{b_{t,p_0} \hspace{-0.10cm}= \hspace{-0.05cm}{b_{t,p_1} \hspace{-0.10cm}= \hspace{-0.10cm}\cdots \hspace{-0.10cm} = \hspace{-0.05cm} b_{t,p_{n-2 \ell -1}} \hspace{-0.10cm}= \hspace{-0.05cm}0}\})\nonumber \\
			&\overset{(d)}{=}i_1(\mathsf{g}_{\{i,j,k\}}\oplus \mathsf{g}_{\hat{\mathcal{T}}})
		\end{align}
		where $\{p_0,p_1,\cdots,p_{n-2\ell-1}\}=\mathcal{P}_0(\mathsf{b}_i\bar{\cup}\mathsf{b}_j\bar{\cup}\mathsf{b}_k)$, where (a), (b) and (c) follow from the repeated application of Theorem~\ref{thm:main_row_merg0}, and (d) comes from \eqref{eq:in_thm_subset01}, which implies that there is no $t\in \tilde{\mathcal{T}}$ such that $\mathcal{P}_1(\mathsf{b}_t)\cap \{p_0,p_1,\cdots,p_{n-2\ell-1}\}=\emptyset$.
		This means that the Hamming weight of $\mathsf{g}_{\{i,j,k,\mathcal{T}\}}$ is lower bounded by the Hamming weight of $\mathsf{g}_{\{i,j,k,\hat{\mathcal{T}}\}}$. Therefore, in the following, we will proceed the proof for $\hat{\mathcal{T}}$. 
		%
		
		Now, assume that $\mathcal{W}=\mathcal{P}_1(\mathsf{b}_i\bar{\cap}\mathsf{b}_j)$, i.e., $\mathcal{W}=\mathcal{P}_1(\mathsf{b}_i\bar{\cap}\mathsf{b}_j\bar{\cap}\mathsf{b}_k)$ as well, by assumption. Then, by partitioning the row indices of the polar encoding matrix respect to binary bit positions $\mathcal{W}$, we obtain the following expression
		\begin{align}
			i_1(\mathsf{g}_{\{i,j,k\}}\hspace{-0.12cm} \oplus \mathsf{g}_{\hat{\mathcal{T}}}) &=i_1(\mathsf{g}_{\{i,j,k\}} \hspace{-0.12cm}\oplus\hspace{-0.07cm} \mathsf{g}_{\hat{\mathcal{T}}}| \mathcal{M}^{c}_{\mathcal{W}(|\mathcal{W}|-1)}\hspace{-0.08cm}\cap \hspace{-0.08cm} \mathcal{M}^{c}_{\mathcal{W}(|\mathcal{W}|-2)}\hspace{-0.08cm}\cap \hspace{-0.08cm}\cdots \hspace{-0.08cm}\cap \hspace{-0.08cm}\mathcal{M}^{c}_{ \mathcal{W}(0)}) \nonumber \\ & +i_1(\mathsf{g}_{\{i,j,k\}} \hspace{-0.12cm} \oplus \hspace{-0.08cm} \mathsf{g}_{\hat{\mathcal{T}}}| \mathcal{M}^{c}_{\mathcal{W}(|\mathcal{W}|-1)}\hspace{-0.08cm}\cap\hspace{-0.08cm} \mathcal{M}^{c}_{\mathcal{W}(|\mathcal{W}|-2)}\hspace{-0.08cm}\cap\hspace{-0.08cm} \cdots \hspace{-0.08cm}\cap\hspace{-0.08cm} \mathcal{M}_{ \mathcal{W}(0)})\nonumber \\ & \hspace{3.5cm}\vdots \nonumber \\ &+i_1(\mathsf{g}_{\{i,j,k\}}\hspace{-0.12cm}\oplus\hspace{-0.08cm} \mathsf{g}_{\hat{\mathcal{T}}}|\mathcal{M}_{\mathcal{W}(|\mathcal{W}|-1)}\hspace{-0.08cm}\cap \hspace{-0.08cm} \mathcal{M}_{\mathcal{W}(|\mathcal{W}|-2)}\hspace{-0.08cm}\cap\hspace{-0.08cm} \cdots\hspace{-0.08cm} \cap \hspace{-0.08cm}\mathcal{M}_{ \mathcal{W}(0)})\nonumber \\
			& \overset{(a)}{=} i_1(\mathsf{g}^{\mathcal{W}}_{\{i,j,k\}}\bigoplus_{t\in \hat{\mathcal{T}}} \mathsf{g}^{\mathcal{W}}_{t})\hspace{-0.05cm}+\hspace{-0.05cm} i_1(\mathsf{g}^{\mathcal{W}}_{\{i,j,k\}}\bigoplus_{t\in \hat{\mathcal{T}}} \mathsf{g}^{\mathcal{W}}_{t}\mathbb{I}\{b_{t,\mathcal{W}(0)}=1\}) \nonumber \\ & \hspace{3.5cm}\vdots \nonumber \\&+ i_1(\mathsf{g}^{\mathcal{W}}_{\{i,j,k\}} \bigoplus_{t\in \hat{\mathcal{T}}} \mathsf{g}^{\mathcal{W}}_{t}\mathbb{I}\{b_{t,\mathcal{W}(|\mathcal{W}|-1)}\hspace{-0.1cm}=\hspace{-0.1cm}b_{t,\mathcal{W}(|\mathcal{W}|-2)}\hspace{-0.1cm}=\hspace{-0.1cm}\cdots =b_{t,\mathcal{W}(0)}=1\})
		\end{align}
		where (a) is due to \eqref{eq:subset_proj0} and \eqref{eq:subset_proj1}. Since $\mathcal{P}_1(\mathsf{b}_i)\setminus \mathcal{P}_1(\mathsf{b}_i\bar{\cap}\mathsf{b}_j)$, $\mathcal{P}_1(\mathsf{b}_j)\setminus \mathcal{P}_1(\mathsf{b}_i\bar{\cap}\mathsf{b}_j)$ and $\mathcal{P}_1(\mathsf{b}_k)\setminus \mathcal{P}_1(\mathsf{b}_i\bar{\cap}\mathsf{b}_k)$ comply with the conditions of Theorem~\ref{thm:main_min_dis_inc_three_rows}, each term of the partition is lower bounded by $2^{\ell-|\mathcal{W}|+1}$. Then,
		\begin{align}
			i_1(\mathsf{g}_{\{i,j,k\}}\oplus \mathsf{g}_{\hat{\mathcal{T}}})&\geq2^{i_1(\mathsf{b}_i\bar{\cap} \mathsf{b}_j)}\cdot (2^{\ell-i_1(\mathsf{b}_i\bar{\cap} \mathsf{b}_j)+1})=2^{\ell+1}
		\end{align}
		where $|\mathcal{W}|=i_1(\mathsf{b}_i\bar{\cap} \mathsf{b}_j)$.
	\end{proof}
	
	In the following, we state the sufficient conditions to increase the information length by multiple bits for a fix codeword length. Thanks to the symmetry imposed by Corollary~\ref{cor:thm_1}, we apply a permutation ${\Pi}$ to any given row triple satisfying the conditions of Theorem~\ref{thm:main_min_dis_inc_with_intrsctn_three_rows} to have the following form
	\begin{align}\label{eq:set_greater1}
		\mathcal{P}_1(\mathsf{b}_i\bar{\cup} \mathsf{b}_j\bar{\cup} \mathsf{b}_k)	\succ	\mathcal{P}_0(\mathsf{b}_i\bar{\cup} \mathsf{b}_j\bar{\cup} \mathsf{b}_k)
	\end{align}
	and
	\begin{align}\label{eq:set_greater2}
		\mathcal{P}_1(\mathsf{b}_i\bar{\cap} \mathsf{b}_j\bar{\cap} \mathsf{b}_k) \succ 	\mathcal{P}_1(\mathsf{b}_i\bar{\cup} \mathsf{b}_j\bar{\cup} \mathsf{b}_k)\setminus	\mathcal{P}_1(\mathsf{b}_i\bar{\cap} \mathsf{b}_j\bar{\cap} \mathsf{b}_k)
	\end{align}	
	and 
	\begin{align}\label{eq:set_greater3}
		&\mathcal{P}_1(\mathsf{b}_k)\setminus \mathcal{P}_1(\mathsf{b}_i\bar{\cap}\mathsf{b}_j)\succ\mathcal{P}_1(\mathsf{b}_j)\setminus \mathcal{P}_1(\mathsf{b}_i\bar{\cap}\mathsf{b}_j),\nonumber \\
		&\mathcal{P}_1(\mathsf{b}_k)\setminus \mathcal{P}_1(\mathsf{b}_i\bar{\cap}\mathsf{b}_j)\succ\mathcal{P}_1(\mathsf{b}_i)\setminus \mathcal{P}_1(\mathsf{b}_i\bar{\cap}\mathsf{b}_j) 
	\end{align}
	and 
	\begin{align}\label{eq:set_greater4}
		&\mathcal{P}_1(\mathsf{b}_i)\setminus \mathcal{P}_1(\mathsf{b}_i\bar{\cap}\mathsf{b}_j) \nsucc\mathcal{P}_1(\mathsf{b}_j)\setminus \mathcal{P}_1(\mathsf{b}_i\bar{\cap}\mathsf{b}_j),\nonumber \\ 
		&\mathcal{P}_1(\mathsf{b}_j)\setminus \mathcal{P}_1(\mathsf{b}_i\bar{\cap}\mathsf{b}_j) \nsucc\mathcal{P}_1(\mathsf{b}_i)\setminus \mathcal{P}_1(\mathsf{b}_i\bar{\cap}\mathsf{b}_j)
	\end{align}
		
		Moreover, let
		$\Pi^{\theta}_{p}$ be a left-circular shift permutation on the index set of binary representation of $p\in \mathcal{N}$, with $\theta\in [0,\kappa], \kappa=t_0\cdot \mathbb{I}\{t_1>0\}+t_1$, $t_0=i_0(\mathsf{b}_i\bar{\cup} \mathsf{b}_j\bar{\cup} \mathsf{b}_k)$ and $t_1=i_1(\mathsf{b}_i\bar{\cap} \mathsf{b}_j\bar{\cap} \mathsf{b}_k)$. We have
		\begin{align}\label{eq:perm_38}
			b_{\Pi^{\theta}_{p},v}=b_{p,v-\theta+n \pmod n}
		\end{align} 
		The following theorem is the main result of this paper.
		\begin{theorem}\label{thm:Last_ensemble}
			Let $\mathbf{C}$ be a polar-like code with information set $\mathcal{A}=\bigcup_{p=\ell+1}^{n}\mathcal{N}_p$. Let $(i,j,k)$ be a triple satisfying the conditions of Theorem~\ref{thm:main_min_dis_inc_with_intrsctn_three_rows} and \eqref{eq:set_greater1}, \eqref{eq:set_greater2}, \eqref{eq:set_greater3}, \eqref{eq:set_greater4}.
			Let $\bar{\mathbf{C}}$ be a code obtained by encoding each of the extra $m \leq t_0+t_1+1$ information bits with a merged row triple $\mathsf{g}_{\{\Pi^{\theta}_{i},\Pi^{\theta}_{j},\Pi^{\theta}_{k}\}}$. Then,
			\begin{align}
				d(\bar{\mathbf{C}})=d(\mathbf{C})
			\end{align}
		\end{theorem}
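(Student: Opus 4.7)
A generic element of $\bar{\mathbf{C}}$ outside $\mathbf{C}$ can be written as
\begin{align}
\mathsf{c}=\mathsf{g}_{\mathcal{T}}\oplus\bigoplus_{\theta \in \Theta}\mathsf{g}_{\{\Pi^{\theta}_{i},\Pi^{\theta}_{j},\Pi^{\theta}_{k}\}},\quad \emptyset \neq \Theta \subseteq [0,\kappa],\ \mathcal{T}\subseteq \mathcal{A},
\end{align}
with $|\Theta|\leq m$. Since $d(\mathbf{C})=2^{\ell+1}$ by \eqref{eq:low_bound_d}, it suffices to show $i_1(\mathsf{c})\geq 2^{\ell+1}$. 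For $|\Theta|=1$ this is immediate: every $\Pi^{\theta}$ is a bit-index permutation, so by Corollary~\ref{cor:thm_1} the shifted triple $(\Pi^\theta_i,\Pi^\theta_j,\Pi^\theta_k)$ preserves all the pairwise and triple intersection cardinalities of $(i,j,k)$, and therefore inherits the hypotheses of Theorem~\ref{thm:main_min_dis_inc_with_intrsctn_three_rows}; applying that theorem yields the desired bound.

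\textbf{General case.} For $|\Theta|\geq 2$ my plan is to mirror, and then generalize, the two-step structure of the proof of Theorem~\ref{thm:main_min_dis_inc_with_intrsctn_three_rows}. First, split $\mathcal{T}=\hat{\mathcal{T}}\cup\tilde{\mathcal{T}}$ where $\tilde{\mathcal{T}}$ collects rows whose $1$-support is \emph{not} contained in the joint support $\mathcal{S}_\Theta:=\bigcup_{\theta\in\Theta}\mathcal{P}_1(\mathsf{b}_{\Pi^\theta_i}\bar{\cup}\mathsf{b}_{\Pi^\theta_j}\bar{\cup}\mathsf{b}_{\Pi^\theta_k})$. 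Iterating Theorem~\ref{thm:main_row_merg0} at each bit position outside $\mathcal{S}_\Theta$, exactly as in steps (a)--(c) of the proof of Theorem~\ref{thm:main_min_dis_inc_with_intrsctn_three_rows}, discards every row of $\tilde{\mathcal{T}}$ and gives
\begin{align}
i_1(\mathsf{c})\geq i_1\Bigl(\bigoplus_{\theta\in\Theta}\mathsf{g}_{\{\Pi^\theta_i,\Pi^\theta_j,\Pi^\theta_k\}}\oplus\mathsf{g}_{\hat{\mathcal{T}}}\Bigr).
\end{align}
Second, I partition this reduced sum on the joint common $1$-bit index set $\mathcal{W}_\Theta:=\bigcup_{\theta\in\Theta}\mathcal{P}_1(\mathsf{b}_{\Pi^\theta_i}\bar{\cap}\mathsf{b}_{\Pi^\theta_j}\bar{\cap}\mathsf{b}_{\Pi^\theta_k})$, using Definition~\ref{def:2} and \eqref{eq:subset_proj1}, in the same style as the closing display of the proof of Theorem~\ref{thm:main_min_dis_inc_with_intrsctn_three_rows}. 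The goal is to show that in each of the $2^{|\mathcal{W}_\Theta|}$ projection regions at most one shifted triple remains ``active'', so that Theorem~\ref{thm:main_min_dis_inc_three_rows} (invoked through Corollary~\ref{cor:thm_1}) lower-bounds the contribution of that region by $2^{\ell-|\mathcal{W}_\Theta|+1}$ and summing over regions recovers $2^{\ell+1}$.

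\textbf{Main obstacle.} The delicate step is establishing the one-active-triple-per-region property. This is where the canonical form \eqref{eq:set_greater1}--\eqref{eq:set_greater4} and the shift range $\theta\in[0,\kappa]$ with $\kappa+1=t_0+t_1+1$ are used essentially: \eqref{eq:set_greater1}--\eqref{eq:set_greater2} put the common-$1$ block of $(i,j,k)$ in the top $t_1$ positions and the common-$0$ block in the bottom $t_0$ positions, so that the cyclic rotations $\Pi^\theta$ defined by \eqref{eq:perm_38} translate these blocks by one position at each step, sweeping out exactly $\kappa+1$ distinct configurations before any ``collision'' occurs inside the projection regions, while \eqref{eq:set_greater3}--\eqref{eq:set_greater4} prevent the mixed middle bits of different shifts from producing additional spurious overlaps. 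Beyond $m=t_0+t_1+1$ two shifted common-$1$ blocks would be forced to coincide in some projection region and the isolation would fail. Formalizing this combinatorial claim for every subset $\Theta\subseteq[0,\kappa]$ and verifying that the residual rows $\hat{\mathcal{T}}$ cannot drive any region's weight below $2^{\ell-|\mathcal{W}_\Theta|+1}$ is the main technical work I anticipate.
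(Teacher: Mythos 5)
Your proposal is a plan rather than a proof: for $|\Theta|\geq 2$ the decisive step — that ``in each of the $2^{|\mathcal{W}_\Theta|}$ projection regions at most one shifted triple remains active'' and that the residual rows $\hat{\mathcal{T}}$ cannot pull any region below $2^{\ell-|\mathcal{W}_\Theta|+1}$ — is exactly the content of the theorem, and you explicitly leave it as ``the main technical work I anticipate.'' Worse, the isolation claim is false as stated: by \eqref{eq:subset_proj1} a row is annihilated only in regions that require a $1$ (i.e.\ $\mathcal{M}_p$) at a position where its binary representation has a $0$; in the all-complement region $\bigcap_{p\in\mathcal{W}_\Theta}\mathcal{M}^c_p$ no row is killed, so \emph{every} shifted triple (and every row of $\hat{\mathcal{T}}$) survives there, and Theorem~\ref{thm:main_min_dis_inc_three_rows} cannot be applied to that region. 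In addition, partitioning on the union $\mathcal{W}_\Theta$ of the shifted common-$1$ sets destroys the uniform accounting that makes the closing computation of Theorem~\ref{thm:main_min_dis_inc_with_intrsctn_three_rows} work (there, all $2^{|\mathcal{W}|}$ regions contribute because $\mathcal{W}$ consists of common $1$-positions of the \emph{same} triple); with several shifts, many regions kill some triples entirely and your ``sum over regions recovers $2^{\ell+1}$'' bookkeeping has no justification.

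The paper's argument is structurally different and avoids this dead end. It first reduces to $\mathcal{D}=[0,t_0+t_1]$ plus $\mathcal{T}\subseteq\mathcal{A}$, then in \eqref{eq_topofpage} splits once at bit position $n-2$: the $\mathcal{M}_{n-2}$ part is banked as a surplus of $2^{\ell-1}$ (step (c)), while on the $\mathcal{M}^c_{n-2}$ part it applies Theorem~\ref{thm:main_row_merg0} repeatedly at the positions $n-1,t_0,\dots,t_1$ to annihilate \emph{all} shifted triples except the single pair $(\Pi^{t_1}_i,\Pi^{t_1}_j)$ — this is where \eqref{eq:set_greater1}--\eqref{eq:set_greater3} are used, not to separate triples region by region but to guarantee every other shifted row has a $1$ in one of the chosen zero-projection positions. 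The surviving pair together with the remaining $\mathcal{T}$-rows is then handled by the two-row result Theorem~\ref{thm:main_min_dis_inc_with_intrsctn_two_rows}, giving $2^{\ell+1}-2^{i_1(\mathsf{b}_i\bar{\cap}\mathsf{b}_j)+1}$, and the banked $2^{\ell-1}$ restores the bound $2^{\ell+1}$ because the hypotheses of Theorem~\ref{thm:main_min_dis_inc_with_intrsctn_three_rows} force $i_1(\mathsf{b}_i\bar{\cap}\mathsf{b}_j)\leq\ell-2$. Your sketch contains neither the surplus-plus-pair mechanism nor any usable substitute for it, so the gap is essential, not merely a matter of formalization.
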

		
		\begin{proof}
			The proof is given in Appendix~\ref{app:thm4}.
		\end{proof}
		The following section explains how Theorem \ref{thm:Last_ensemble} is used in order to increase the information length of a polar-like code with RM information set by preserving the minimum distance.

\begin{table}
	\centering
	\begin{tikzpicture}[x=1.6cm,y=.75cm]
		\draw (0,0) grid [step=1] (5,3);
		\draw (0,3) -- (1,2);
		\node at (0.5,2.5) [below left,inner sep=2pt] {\small$n$};
		\node at (0.5,2.35) [above right,inner sep=4pt] {\small$r$};
		\node at (0.5,1.5) {6};
		\node at (0.5,0.5) {7};
		\node at (1.5,2.5) {2};
		\node at (2.5,2.5) {3};
		\node at (3.5,2.5) {4};
		\node at (4.5,2.5) {5};
		\node at (1.5,1.5) {\footnotesize{$(1,23,16)$}};
		\node at (2.5,1.5) {\footnotesize{$-$}};
		\node at (3.5,1.5) {\footnotesize{$-$}};
		\node at (4.5,1.5) {\footnotesize{$-$}};
		\node at (1.5,0.5) {\footnotesize{$-$}};
		\node at (2.5,0.5) {\footnotesize{$(2,66,16)$}};
		\node at (3.5,0.5) {\footnotesize{$(1,100,8)$}};
		\node at (4.5,0.5) {\footnotesize{$-$}};
	\end{tikzpicture}
	\caption{Number of additional information bits $m$ that can be added on top of the information length $k$ for the minimum distance $d$ according to the recursion number $n$ and order $r$ of the underlying RM-polar code.}
	\label{Fig:Table_main}
\end{table}

\section{Code Construction} \label{sec:codeconstruc}


Let us consider a triple $(i,j,k)$ that satisfies the conditions of Theorem~\ref{thm:main_min_dis_inc_with_intrsctn_three_rows}, \eqref{eq:set_greater1}, \eqref{eq:set_greater2}, \eqref{eq:set_greater3} and \eqref{eq:set_greater4}. For any $m\in [1,t_0+t_1+1]$,
\begin{itemize}
    \item Each of $m-1$ triples,  i.e., $\left\{(i_0,j_0,k_0) , \cdots, (i_{m-2},j_{m-2},k_{m-2})\right\}$, corresponds to one of consecutive left-circular shifts of $(i,j,k)$.
	\item For all triples, the permutation of their binary representations such that the smallest element among all the triples is maximized, is searched. This prevents from adding more badly polarized bit sub-channels to the information set. Indeed, with Corollary~\ref{cor:thm_1}, the code constructed by any permutation of $m-$triples has in the same distance spectrum since the underlying information set is chosen by RM rule.
	\item Algorithm~1 of \cite{ICC_row_merging} is applied to obtain the pairs $(t,v)$, where $t\in \mathcal{N}_{\ell+1}$, $v\in \mathcal{N}_{\ell},$ $v>t$, $\ell=i_1(\mathsf{b}_i)$, to decrease the number of minimum weight codewords. 
\end{itemize}
\begin{remark}\label{rmk:RMRK1}
	Even though we have verified experimentally that the application of the third step does not decrease the minimum distance, an explicit proof of this evidence is complex and remains to be done.
\end{remark}

The pre-transformation matrix is constructed by adding the smallest index of each of $m-$triple to the information set and the other two indices are considered as dynamic frozen bits. For any obtained pair $(t,v), \, v$ is considered as the dynamic frozen bit. The pre-transformation matrix $\mathsf{T}$, is such that
\begin{align}
	T_{a,a}=T_{a,b}=T_{a,c}=T_{t,v}=1
\end{align}
where $a\in \mathcal{N}$ is the minimum of the triples, and $v\in \mathcal{N}_{\ell}$ is the associated index to any $t\in \mathcal{N}_{\ell+1}$ by the application of Algorithm~1 of~\cite{ICC_row_merging} to obtain pairs instead of triples. 

Table \ref{Fig:Table_main} summarizes the characteristics of the codes that we can construct with our method. Each entry of the table is a triple $(m,k,d)$ where $m$ is the number of information bits that can be added on the initial $k$ information bits and $d$ is the minimum distance. The code parameters are given according to two other parameters, $(n,r)$ representing the recursive number and the order of the RM polar codes, respectively. Moreover,'$-$' means that Theorem~\ref{thm:Last_ensemble} cannot be applied for the corresponding RM$(n,r)$. For $n=6$ and $n=7$, we have codewords of length $64$ and $128$ respectively. It can be seen that, for instance, for a block length 128 and $r=3$, the code rate can be extended from $23/64$ to $24/64$ while keeping the same minimum distance, which is an interesting improvement at this short block length.

\section{Simulation Results} \label{sec:simul}

We numerically compare in Figure~\ref{Fig:Main_comp} our proposed design (PD) with PAC codes and the saddle-point approximation of the MC (SMC) bound \cite{Philippe} for the binary input additive white Gaussian noise channel. Our construction for the code $(128,66)$ is obtained by first adding two extra bits to the $(128,64)$ polar-like code with RM information set and then by applying \cite[Algortihm 1]{ICC_row_merging} to obtain $(t,v)$ pairs such that $i_1(\mathsf{b}_t\bar{\cap}\mathsf{b}_v)=1$. Similarly, the code $(128,100)$ is obtained by first adding one extra bit to the polar-like code $(128,99)$ with RM information set and then by applying \cite[Algortihm 1]{ICC_row_merging} to obtain $(t,v)$ pairs such that $i_1(\mathsf{b}_t\bar{\cap}\mathsf{b}_v)=0$.

For PAC codes, the additional information indices are chosen as the most reliable bit subchannel indices from the set $\mathcal{N}_{\ell}$, which are the highest indices due to partial ordering \cite{Partial_order}. We optimize the polynomial of the convolutional code with memory length $7$ to minimize the number of minimum weight codewords. We implemented the algorithm \cite{Tse} with a large list size, i.e. $ 5 \cdot 10^4 $, and we choose the one that leads to the minimal number of second minimum weight codewords since the number of minimum weight codewords does not change for a few increment of the information length.
\begin{figure}
	\centering
	\begin{tikzpicture}
		\begin{semilogyaxis}[%
			width=1\columnwidth,
			height=10\baselineskip,
			xmin=4.2,
			xmax=7.1,
			xmajorgrids,
			xlabel={$\mathsf{E_b}/\mathsf{N_0}$ in dB},
			ymin=0.0002,
			ymax=0.15,
			ymajorgrids,
			ylabel={Frame Error Rate},
			legend style={ legend columns=3,at={(0.412,1)},anchor=south,draw=none,fill=none,legend cell align=left, font=\scriptsize}
			]
			\addplot[color=black,mark=o,every mark/.append style=solid,densely dashed] coordinates {
				(1.34+2.943,1144/20000)
				(1.8176+2.943,459/33563)
				(2.2437+2.943,255/100000)
				(2.6764+2.943,213/734138)
			};
			\addlegendentry{$k=66$, SMC} 
			
			
			\addplot[color=blue,mark=o] coordinates {
				(1.5+2.943,1144/20000)
				(2+2.943,459/33563)
				(2.5+2.943,255/100000)
				(3+2.943,213/734138)
			};
			\addlegendentry{$k=66$, PD}
			
			\addplot[color=green!80!black!100!,mark=o] coordinates {
				(1.5+2.943,657/10000)
				(2+2.943,235/11000)
				(2.5+2.943,295/49704)
				(3+2.943,218/119749)
			};
			\addlegendentry{$k=66$, PAC}
			
			\addplot[color=black,mark=triangle,every mark/.append style=solid,densely dashed] coordinates {
				(4.4694+1.0721,1205/15000)
				(4.8058+1.0721,422/14287)
				(5.26+1.0721,356/60000)
				(5.6575+1.0721,249/360698)
			};
			\addlegendentry{$k=100$, SMC}
			
			
			\addplot[color=blue,mark=triangle] coordinates {
				(4.5+1.0721,1205/15000)
				(5+1.0721,422/14287)
				(5.5+1.0721,356/60000)
				(6+1.0721,249/360698)
			};
			\addlegendentry{$k=100$, PD}
			
			\addplot[color=green,mark=triangle] coordinates {
				(4.5+1.0721,1081/10000)
				(5+1.0721,447/12551)
				(5.5+1.0721,433/50000)
				(6+1.0721,114/70398)
			};
			\addlegendentry{$k=100$, PAC}
			
		\end{semilogyaxis}%
	\end{tikzpicture}
	\caption{FER of our proposed scheme (PD), compared to SMC and PAC codes, $N=128$, $k\in\{66;100\}$.}
	\label{Fig:Main_comp}
\end{figure}
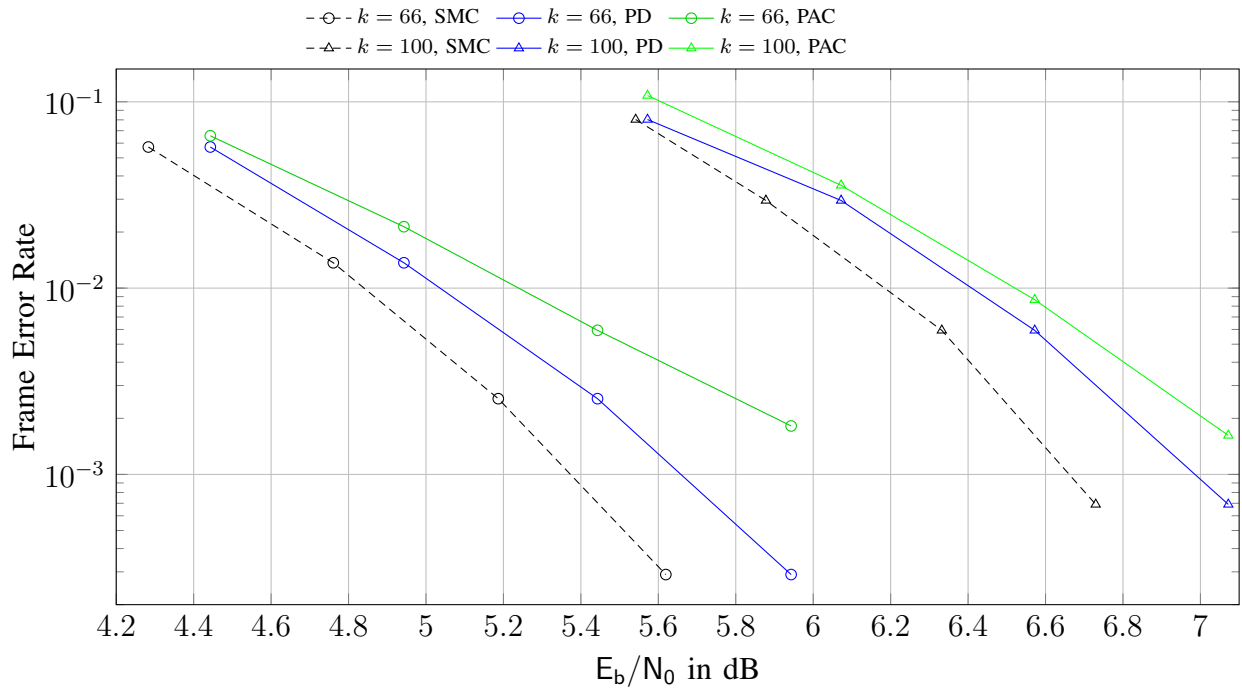

Figure \ref{Fig:Main_comp} plots the frame error rate (FER) versus $E_b / N_0$ for a code length of $N=128$ in an additive white Gaussian noise channel and two information length, i.e. $k=66$ and $k=100$. Our proposed design (PD) outperforms PAC codes for the entire range of $E_b / N_0$, since, at short block lengths, the minimum distance plays an important role in the SCL decoding with large list sizes. In particular, for $(128,66)$ code, while our design performs within $0.25$ dB of SMC bound at FER of $3.10^{-4}$, PAC code achieves the same performance with $0.4$ dB of additional power. For $(128,100)$ codes, our design outperforms PAC code of about $0.2$ dB at FER of $10^{-3}$.

\section{Concluding Remarks} \label{sec:conclusion}

In this work, we proposed a method to increase the information length of a polar-like code while keeping the same minimum distance with the underlying RM code. Our findings allow to reduce the number of minimum weight codewords of polar-like codes that perform closer to the MC bound than PAC codes with the same system parameters. We believe that this work may lead to a new method for code design, particularly at short block lengths, with interesting performance. The extension of this work to moderate block lengths is under investigation.

\bibliographystyle{IEEEtran}
\bibliography{references}


\section{Appendices}

\subsection{Preliminary Theorems}
In this section, we provide preliminary results that are useful in the proof of Theorem~\ref{thm:main_min_dis_inc_three_rows} and Theorem~\ref{thm:Last_ensemble}.


\begin{theorem}\label{thm:main_min_dis_inc}
	For any pairs 
	$(i,j)\in \mathcal{N}_{\ell}$, $\mathcal{N}_{\ell}:=\{t: i_1(\mathsf{b}_t)=\ell, t\in \mathcal{N}\}$, such that $i_1(\mathsf{b}_i\bar{\cap}\mathsf{b}_j)=0$, the combination of $\mathsf{g}_i\oplus\mathsf{g}_j$ with higher hamming weight rows of polar encoding matrix is lower bounded by Hamming weight of $\mathsf{g}_i\oplus\mathsf{g}_j$:
	\begin{align}\label{eq:eqt_thm4}
		i_1(\mathsf{g}_i\oplus\mathsf{g}_j\oplus \mathsf{g}_{\mathcal{T}})&\geq i_1(\mathsf{g}_i\oplus\mathsf{g}_j)\overset{(a)}{=}2^{\ell+1}-2
	\end{align}
	where $\mathcal{T}\subseteq \cup_{p=\ell+1}^{n}\mathcal{N}_p$ and (a) is by Theorem~\ref{th:Hamm_of_sum0000}.
\end{theorem}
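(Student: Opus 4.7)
The plan is to first establish the equality (a) directly from Theorem~\ref{th:Hamm_of_sum0000}, and then prove the inequality $i_1(\mathsf{g}_i \oplus \mathsf{g}_j \oplus \mathsf{g}_\mathcal{T}) \geq 2^{\ell+1}-2$ via a three-stage reduction: shrinking $\mathcal{T}$ through iterated applications of Theorem~\ref{thm:main_row_merg0}, descending to a sub-code of length $2^{2\ell}$ by a permutation and a projection, and finally carrying out a combinatorial analysis in a $2^\ell \times 2^\ell$ matrix view. For~(a), I would apply Theorem~\ref{th:Hamm_of_sum0000} with $\mathcal{T}=\{i,j\}$: the singleton terms contribute $2\cdot 2^\ell$ and the single pair term contributes $-2\cdot 2^{i_1(\mathsf{b}_i \bar{\cap} \mathsf{b}_j)} = -2$ by the disjointness hypothesis, yielding $2^{\ell+1}-2$.

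For the inequality, the first stage is to iterate Theorem~\ref{thm:main_row_merg0} over all positions $p \in \mathcal{P}_0(\mathsf{b}_i \bar{\cup} \mathsf{b}_j)$. At each such $p$ one has $b_{i,p}=b_{j,p}=0$, so the zero-restriction step retains $i$ and $j$ while discarding every $t \in \mathcal{T}$ with $b_{t,p}=1$. After the $n-2\ell$ iterations I reach
\begin{align*}
i_1(\mathsf{g}_i \oplus \mathsf{g}_j \oplus \mathsf{g}_\mathcal{T}) \geq i_1(\mathsf{g}_i \oplus \mathsf{g}_j \oplus \mathsf{g}_{\mathcal{T}''}),\quad \mathcal{T}'' = \{t \in \mathcal{T} : \mathcal{P}_1(\mathsf{b}_t) \subseteq \mathcal{P}_1(\mathsf{b}_i \bar{\cup} \mathsf{b}_j)\}.
\end{align*}
The second stage uses Corollary~\ref{cor:thm_1} to permute bit positions so that $\mathcal{P}_1(\mathsf{b}_i) = [0,\ell)$ and $\mathcal{P}_1(\mathsf{b}_j) = [\ell,2\ell)$; every index in $\{i,j\}\cup\mathcal{T}''$ then has zeros in positions $[2\ell,n)$, and \eqref{eq:subset_proj1} implies that the rows and all their XORs vanish outside $\bigcap_{p \in [2\ell, n)} \mathcal{M}^c_p$. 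Hamming weights therefore coincide with those computed in the length-$2^{2\ell}$ polar sub-code produced by the projection of Definition~\ref{def:2}.

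In that $2\ell$-bit sub-code I would view each row as a $2^\ell \times 2^\ell$ matrix indexed by (high, low) bits. Then $\mathsf{g}_i$ has its whole row $0$ equal to $1$, $\mathsf{g}_j$ its whole column $0$ equal to $1$, and every $\mathsf{g}_t$ with $t \in \mathcal{T}''$ is the indicator of a combinatorial rectangle of dimensions $2^\alpha \times 2^\beta$, with $\alpha+\beta = i_1(\mathsf{b}_t) > \ell$ and $\alpha,\beta \leq \ell$, which forces $\alpha,\beta \geq 1$. Writing $\mathsf{v} = \bigoplus_{t \in \mathcal{T}''}\mathsf{g}_t$ and letting $a,b,c,d$ denote the numbers of $1$'s of $\mathsf{v}$ on row $0$ minus $(0,0)$, on column $0$ minus $(0,0)$, at $(0,0)$, and on the $(2^\ell-1)^2$ interior entries respectively, the cross-shaped support of $\mathsf{g}_i\oplus\mathsf{g}_j$ gives
\begin{align*}
i_1(\mathsf{g}_i \oplus \mathsf{g}_j \oplus \mathsf{v}) = 2^{\ell+1}-2 + (c+d) - (a+b),
\end{align*}
so the target bound reduces to the inequality $d \geq a + b - c$. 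For a single rectangle the four counts are $(a,b,c,d) = (2^\beta - 1,\; 2^\alpha - 1,\; 1,\; (2^\alpha-1)(2^\beta-1))$, and the required inequality becomes $(2^\alpha - 2)(2^\beta - 2) \geq 0$, which holds because $\alpha,\beta \geq 1$.

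The main obstacle will be lifting the per-rectangle inequality to an $\mathbb{F}_2$-sum of rectangles, since Hamming weight is not additive under XOR. I plan to proceed by induction on $|\mathcal{T}''|$, exploiting the factored form $\mathsf{v}[r,c] = \bigoplus_{t \in \mathcal{T}''}\mathsf{g}'_{\mathsf{b}_t^{hi}}[r]\,\mathsf{g}'_{\mathsf{b}_t^{lo}}[c]$ (with $\mathsf{g}'$ the length-$2^\ell$ polar rows) and the row/column symmetry that links the changes in $a,b,c,d$ when a new rectangle is added. The slack $(2^\alpha-2)(2^\beta-2)$ produced at each single-rectangle step supplies the budget needed to absorb the cross-cancellations between interior and boundary entries that arise during the induction.
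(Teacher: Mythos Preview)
Your first two stages coincide with the paper's argument: the paper also uses Corollary~\ref{cor:thm_1} to place $\mathcal{P}_1(\mathsf{b}_i)=[0,\ell)$, $\mathcal{P}_1(\mathsf{b}_j)=[\ell,2\ell)$, and then iterates Theorem~\ref{thm:main_row_merg0} over the positions of $\mathcal{P}_0(\mathsf{b}_i\bar\cup\mathsf{b}_j)$ to reduce to the set you call $\mathcal{T}''$. Your matrix reformulation in stage~3 is also correct, and the reduction to the inequality $c+d\ge a+b$ is clean and accurate.

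The gap is entirely in the last paragraph. The proposed induction on $|\mathcal{T}''|$ with a ``slack budget'' does not work as stated, because the slack $(2^{\alpha}-2)(2^{\beta}-2)$ vanishes whenever $\alpha=1$ or $\beta=1$, and such rectangles are certainly allowed (any $t$ with $\mathcal{P}_1(\mathsf{b}_t)\supset\mathcal{P}_1(\mathsf{b}_i)$ or $\mathcal{P}_1(\mathsf{b}_t)\supset\mathcal{P}_1(\mathsf{b}_j)$). For two such zero-slack rectangles, cross-cancellations between boundary and interior do occur, and your budget is $0$; the inequality happens to survive, but not for the reason you give. In short, the statement $c+d\ge a+b$ is true, but your inductive mechanism does not prove it.

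The paper sidesteps this obstacle by splitting $\mathcal{T}''$ (its $\hat{\mathcal{T}}$) into
\[
\hat{\mathcal{T}}^{0}=\{t:\mathcal{P}_1(\mathsf{b}_t)\nsupseteq\mathcal{P}_1(\mathsf{b}_i),\ \mathcal{P}_1(\mathsf{b}_t)\nsupseteq\mathcal{P}_1(\mathsf{b}_j)\},\qquad
\hat{\mathcal{T}}^{1}=\hat{\mathcal{T}}\setminus\hat{\mathcal{T}}^{0},
\]
which in your language separates the rectangles with $1\le\alpha,\beta\le\ell-1$ from the full-row/full-column ones. When $\hat{\mathcal{T}}^{0}\neq\emptyset$, two further applications of Theorem~\ref{thm:main_row_merg0} at a position in $\mathcal{P}_1(\mathsf{b}_j)\cap\mathcal{P}_0(\mathsf{b}_t)$ and one in $\mathcal{P}_1(\mathsf{b}_i)\cap\mathcal{P}_0(\mathsf{b}_t)$ strip away $i$, $j$ and all of $\hat{\mathcal{T}}^{1}$, leaving only rows of weight $\ge\ell+1$ and hence a bound $2^{\ell+1}$. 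When $\hat{\mathcal{T}}^{0}=\emptyset$ (precisely your zero-slack situation), the paper projects along the $\ell$ low-bit positions, observes that the $\hat{\mathcal{T}}^{1}_j$-part contributes either $0$ or the all-ones vector on each block while the $\hat{\mathcal{T}}^{1}_i$-part has projected weight in $[1,2^{\ell}-1]$, and counts blocks directly to obtain $\ge 2^{\ell+1}-2$. That case analysis is what replaces your missing induction step; if you want to rescue the matrix approach, you will need an argument of comparable strength for the $\alpha=1$ or $\beta=1$ rectangles rather than a budget heuristic.
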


\begin{proof} For $\ell=1$, the statement of the theorem turns to be trivial due to the fact that $i_1(\mathsf{g}_{\{i,j,\mathcal{T}\}})\geq 2^{\min_{k\in \{i,j,\mathcal{T}\}}i_1(\mathsf{b}_k)}=i_1(\mathsf{g}_i\oplus\mathsf{g}_j)$ for $\ell=1$. Hence, in the following, the proof is conducted for $\ell\geq 2$.
	
	Due to the symmetry imposed by Corollary~\ref{cor:thm_1}, without losing generality we assume that $\mathcal{P}_1(\mathsf{b}_i)=[0,\ell-1]$ and $\mathcal{P}_1(\mathsf{b}_j)=[\ell,2\ell-1]$. In the following we use this assumption for the ease of presentation.
	
	We can divide the index set into two subsets such that
	\begin{align}\label{eq:in_thm_subset1}
		\hat{\mathcal{T}}:=\{k:\mathcal{P}_1(\mathsf{b}_k)\cap[2\cdot\ell,n)= \emptyset, k\in \mathcal{T} \}
	\end{align}
	and $\tilde{\mathcal{T}}=\mathcal{T}\setminus\hat{\mathcal{T}}$.
	Then, by Theorem~\ref{thm:main_row_merg0}
	\begin{align}\label{eq:outside_of_scope}
		i_1(\mathsf{g}_i\oplus\mathsf{g}_j\bigoplus_{k\in \hat{\mathcal{T}}}\mathsf{g}_k\bigoplus_{t\in \tilde{\mathcal{T}}}\mathsf{g}_t) &\geq \max_{p_0\in [2\cdot \ell,n)}i_1(\mathsf{g}_i\oplus\mathsf{g}_j\bigoplus_{k\in \hat{\mathcal{T}}}\mathsf{g}_k\bigoplus_{t\in \tilde{\mathcal{T}}}\mathsf{g}_t \mathbb{I}\{b_{t,p_0}=0\})\nonumber\\
		&\geq \max_{p_1\in [2\cdot \ell,n)\setminus p_0} i_1(\mathsf{g}_i\oplus \mathsf{g}_j \bigoplus_{k\in \hat{\mathcal{T}}}\mathsf{g}_k\bigoplus_{t\in \tilde{\mathcal{T}}}\mathsf{g}_t \mathbb{I}\{b_{t,p_0}={b_{t,p_1}\hspace{-0.2cm}=\hspace{-0.07cm}0}\})\nonumber \\
		&\hspace{0.2cm} \vdots \nonumber \\
		&\overset{(a)}{\geq} i_1(\mathsf{g}_i\hspace{-0.07cm}\oplus \mathsf{g}_j \hspace{-0.07cm}\bigoplus_{k\in \hat{\mathcal{T}}}\mathsf{g}_k\bigoplus_{t\in \tilde{\mathcal{T}}}\mathsf{g}_t \mathbb{I}\{b_{t,p_0}\hspace{-0.2cm}=\hspace{-0.08cm}{b_{t,p_1}\hspace{-0.2cm}=\hspace{-0.07cm}\cdots\hspace{-0.07cm} = \hspace{-0.07cm}b_{t,p_{n-2 \ell -1}}\hspace{-0.2cm}=\hspace{-0.07cm}0}\})\nonumber \\
		&\overset{(b)}{=}i_1(\mathsf{g}_i\oplus \mathsf{g}_j \bigoplus_{k\in \hat{\mathcal{T}}}\mathsf{g}_k )
	\end{align}
	where (a) comes from the repeated application of Theorem~\ref{thm:main_row_merg0} in $[2\ell,n)$, and (b) comes from \eqref{eq:in_thm_subset1}, which implies for any $t \in \tilde{\mathcal{T}}$, $\mathcal{P}_1(\mathsf{b}_t)\cap [2\cdot\ell,n)\neq \emptyset$. 
	Let us continue for $\hat{\mathcal{T}}$ which is divided into two subsets such that
	\begin{align}\label{eq:t_00_new}
		\hat{\mathcal{T}}^{0}\hspace{-0.13cm}:=\hspace{-0.1cm}\{k\hspace{-0.1cm}:\hspace{-0.1cm} \mathcal{P}_1(\mathsf{b}_k)\hspace{-0.1cm}\nsupseteq \hspace{-0.1cm} \mathcal{P}_1(\mathsf{b}_i),\mathcal{P}_1(\mathsf{b}_k)\hspace{-0.1cm}\nsupseteq \hspace{-0.1cm}\mathcal{P}_1(\mathsf{b}_j),k\hspace{-0.1cm}\in\hspace{-0.1cm}\hat{\mathcal{T}} \}
	\end{align}
	and
	\begin{align}\label{eq:t_01_new}
		\hspace{-0.2cm}\hat{\mathcal{T}}^{1}\hspace{-0.13cm}:=\hspace{-0.1cm}\{k\hspace{-0.1cm}: \hspace{-0.1cm}\mathcal{P}_1(\mathsf{b}_k)\hspace{-0.1cm}\supset\hspace{-0.1cm} \mathcal{P}_1(\mathsf{b}_i)\text{ or }\mathcal{P}_1(\mathsf{b}_k)\hspace{-0.1cm}\supset\hspace{-0.1cm} \mathcal{P}_1(\mathsf{b}_j),\hspace{-0.05cm}k\hspace{-0.1cm}\in\hspace{-0.1cm}\hat{\mathcal{T}} \}
	\end{align}

	Two cases have to be investigated, i.e. when $\hat{\mathcal{T}}^{0}$  is not empty and $\hat{\mathcal{T}}^{0}$ is empty.

	\paragraph*{Case~1, $\hat{\mathcal{T}}^{0} \neq \emptyset$} The elements of this set are partitioned with respect to the Hamming weights of their binary representations:
	
	\begin{align}
		\hat{\mathcal{T}}^{0}=\bigcup_{d=1}^{n-\ell} 	\hat{\mathcal{T}}_{\ell+d}^{0}, \; \; \; \;   \hat{\mathcal{T}}_{\ell+d}^{0}:= \{k: i_1(\mathsf{b}_k)=\ell+d, k\in \hat{\mathcal{T}}^{0} \}
	\end{align}

	Now, let us assume that $d^{*}$ be the minimum number such that $\hat{\mathcal{T}}_{\ell+d}^{0}$ is not empty, i.e., $d^{*}=\min \{d: \hat{\mathcal{T}}_{\ell+d}^{0}\neq \emptyset, n-\ell \geq d\geq 1\}$. Note that for any $t\in \hat{\mathcal{T}}^{0}_{\ell+d^{*}}$, $\exists p_0\in \mathcal{P}_0(\mathsf{b}_i)\cap \mathcal{P}_0(\mathsf{b}_t)\cap \mathcal{P}_1(\mathsf{b}_j) \text{ and } p_1\in \mathcal{P}_1(\mathsf{b}_i)\cap\mathcal{P}_0(\mathsf{b}_j) \cap \mathcal{P}_0(\mathsf{b}_t)$ such that
	\begin{align}\label{eq:eqCase_1_1_3}
		i_1(&\mathsf{g}_i\oplus \mathsf{g}_j\oplus \mathsf{g}_{\hat{\mathcal{T}}^{0}}\oplus \mathsf{g}_{\hat{\mathcal{T}}^{1}}) \nonumber \\
		 &\overset{(a)}{\geq} i_1(\mathsf{g}_i\mathbb{I}\{b_{i,p_0}=0\} \oplus\mathsf{g}_j\mathbb{I}\{b_{j,p_0}=0\}\oplus \mathsf{g}_t\mathbb{I}\{b_{t,p_0}=0\}\bigoplus_{k\in\hat{\mathcal{T}}^{0}\setminus t}\mathsf{g}_k\mathbb{I}\{b_{k,p_0}=0\}\bigoplus_{k\in\hat{\mathcal{T}}^{1}}\mathsf{g}_k\mathbb{I}\{b_{k,p_0}\hspace{-0.1cm}=0\})\nonumber \\
		& \overset{(b)}{\geq} i_1(\mathsf{g}_i\oplus \mathsf{g}_t\bigoplus_{k\in\hat{\mathcal{T}}^{0}\setminus t}\mathsf{g}_k\mathbb{I}\{b_{k,p_0}=0\}\bigoplus_{k\in\hat{\mathcal{T}}^{1}}\mathsf{g}_k\mathbb{I}\{b_{k,p_0}=0\})\nonumber \\
		& & \nonumber \\
		&\overset{(c)}{\geq} i_1(\mathsf{g}_i\mathbb{I}\{b_{i,p_0}=b_{i,p_1}=0\}\mathsf{g}_t\mathbb{I}\{b_{t,p_0}=b_{t,p_1}=0\}\bigoplus_{k\in\hat{\mathcal{T}}^{0}\setminus t}\mathsf{g}_k\mathbb{I}\{b_{k,p_0}=b_{k,p_1}=0\}\nonumber \\ & \hspace{10cm}\bigoplus_{k\in\hat{\mathcal{T}}^{0}}\mathsf{g}_k\mathbb{I}\{b_{k,p_0}=b_{k,p_1}=0\})\nonumber \\
		&=i_1(\bigoplus_{k\in\hat{\mathcal{T}}^{0}}\mathsf{g}_k\mathbb{I}\{b_{k,p_0}=b_{k,p_1}=0\})\overset{(d)}{\geq} 2^{\ell+d^{*}}
	\end{align}
	where (a) and (c) come from Theorem~\ref{thm:main_row_merg0} and from the conditions imposed by \eqref{eq:t_00_new} and \eqref{eq:t_01_new}, and (b) comes from the fact $b_{j,p_0} = 1$ by hypothesis.  Moreover (d) is due to the fact that $i_1(\mathsf{g}_{\mathcal{S}})\geq \min_{j\in \mathcal{S}}2^{i_1(\mathsf{b}_j)}$, i.e., by \eqref{eq:low_bound_d}.
		That means if $\hat{\mathcal{T}}^{0}$ is not empty, the lower bound is satisfied whatever $\hat{\mathcal{T}}^{1}$ is empty or not. 
	
 	\paragraph*{Case~2, $\hat{\mathcal{T}}^{0} = \emptyset$} The elements of $\hat{\mathcal{T}}^{1}$ are partitioned such as 
	
		\begin{align}\label{eq:cas2_01000}
		\hat{\mathcal{T}}_{i}^{1}=\begin{cases}
			\{k:\mathcal{P}_1(\mathsf{b}_k)\supset \mathcal{P}_1(\mathsf{b}_i), k\in \hat{\mathcal{T}}^{1} \}\setminus \{2^{2\cdot \ell}-1\}& \text{if } 2^{2\cdot \ell}-1\in \hat{\mathcal{T}}^{1}\\
			\{k:\mathcal{P}_1(\mathsf{b}_k)\supset \mathcal{P}_1(\mathsf{b}_i), k\in \hat{\mathcal{T}}^{1} \} & \text{ otherwise}
		\end{cases}
	\end{align}
	 and 
	
	\begin{align}\label{eq:cas2_01001}
			\hat{\mathcal{T}}_{j}^{1}= \{k:\mathcal{P}_1(\mathsf{b}_k)\supset \mathcal{P}_1(\mathsf{b}_j), k\in \hat{\mathcal{T}}^{1} \}
		\end{align}
	Then, we obtain the total Hamming weight by partitioning the indices of row vectors into $2^{\ell}$ subsets with respect to binary representation indices $[0,\ell-1]$ and using Definition \ref{def:2}:
	 			\begin{align}\label{eq:subset_proj0001}
		 				i_1(\mathsf{g}_i\oplus \mathsf{g}_j\oplus\mathsf{g}_{\hat{\mathcal{T}}_{i}^{0}}\oplus\mathsf{g}_{\hat{\mathcal{T}}_{i}^{1}})&=i_1(\mathsf{g}_i\hspace{-0.08cm} \oplus \hspace{-0.08cm} \mathsf{g}_j\hspace{-0.08cm} \oplus\mathsf{g}_{\hat{\mathcal{T}}_{i}^{0}}\oplus\mathsf{g}_{\hat{\mathcal{T}}_{i}^{1}}|\mathcal{M}^{c}_{\ell-1}\hspace{-0.08cm}\cap\hspace{-0.08cm} \mathcal{M}^{c}_{\ell-2}\hspace{-0.08cm}\cap\hspace{-0.08cm}\cdots \hspace{-0.08cm}\cap\hspace{-0.08cm} \mathcal{M}^{c}_{1}\hspace{-0.08cm}\cap\mathcal{M}^{c}_{0})\nonumber \\ &+	i_1(\mathsf{g}_i\hspace{-0.08cm} \oplus \hspace{-0.08cm} \mathsf{g}_j\hspace{-0.08cm}\oplus\mathsf{g}_{\hat{\mathcal{T}}_{i}^{0}}\oplus\mathsf{g}_{\hat{\mathcal{T}}_{i}^{1}}|\mathcal{M}^{c}_{\ell-1}\hspace{-0.08cm} \cap \hspace{-0.08cm} \mathcal{M}^{c}_{\ell-2}\hspace{-0.08cm} \cap\hspace{-0.08cm} \cdots \hspace{-0.08cm}\cap \hspace{-0.08cm} \mathcal{M}^{c}_{1}\cap\hspace{-0.08cm}\mathcal{M}_{0}) \cdots \nonumber \\ \cdots
		 				&+ i_1(\mathsf{g}_i\hspace{-0.08cm}\oplus\hspace{-0.08cm} \mathsf{g}_j\hspace{-0.08cm}\oplus\mathsf{g}_{\hat{\mathcal{T}}_{i}^{0}}\oplus\mathsf{g}_{\hat{\mathcal{T}}_{i}^{1}}|\mathcal{M}_{\ell-1}\hspace{-0.08cm}\cap\hspace{-0.08cm} \mathcal{M}_{\ell-2}\hspace{-0.08cm}\cap\hspace{-0.08cm}\cdots\hspace{-0.08cm} \cap\hspace{-0.08cm} \mathcal{M}_{1}\cap\hspace{-0.08cm} \mathcal{M}_{0})
		 			\end{align} 
	 			By Definition~\ref{def:2}, we can write \eqref{eq:subset_proj0001} in a more compact form by denoting $\mathcal{M}^{c}_{p}$ with $0$ and $\mathcal{M}_{p}$ with $1$ for $p\in [0,\ell-1]$, then

	 			\begin{align}\label{eq:Case2_new_proof_000}
		 					i_1(\mathsf{g}_i\oplus \mathsf{g}_j\oplus\mathsf{g}_{\hat{\mathcal{T}}_{i}^{1}}\oplus\mathsf{g}_{\hat{\mathcal{T}}_{j}^{1}})&=\hspace{-0.1cm}\sum_{k=0}^{2^{\ell}-1}\hspace{-0.1cm}i_1(\mathsf{g}^{[0, \ell-1]}_i\mathbb{I}\{\mathcal{P}_1(\mathsf{b}_i)\hspace{-0.1cm}\supset\hspace{-0.1cm} \mathcal{P}_1(\mathsf{b}_k)\}\hspace{-0.07cm}\oplus\hspace{-0.07cm}\mathsf{g}^{[0, \ell-1]}_j\mathbb{I}\{\mathcal{P}_1(\mathsf{b}_j)\hspace{-0.1cm}\supset\hspace{-0.1cm}\mathcal{P}_1(\mathsf{b}_k)\}\nonumber \\ &\hspace{1.5cm}\bigoplus_{t\in \hat{\mathcal{T}}_{i}^{1}}\mathsf{g}^{[0, \ell-1]}_t\mathbb{I}\{\mathcal{P}_1(\mathsf{b}_t)\hspace{-0.1cm}\supset\hspace{-0.1cm}\mathcal{P}_1(\mathsf{b}_k)\}\bigoplus_{t\in \hat{\mathcal{T}}_{j}^{1}}\mathsf{g}^{[0, \ell-1]}_t\mathbb{I}\{\mathcal{P}_1(\mathsf{b}_t)\hspace{-0.1cm}\supset\hspace{-0.1cm}\mathcal{P}_1(\mathsf{b}_k)\}))\nonumber \\
&\overset{(a)}{=}\sum_{k=0}^{2^{\ell}-1} i_1(\mathsf{g}^{[0, \ell-1]}_{\{i,\hat{\mathcal{T}}_{i}^{1}\}}\bigoplus_{t\in \{j,\hat{\mathcal{T}}_{j}^{1}\}}\mathsf{g}^{[0, \ell-1]}_t\mathbb{I}\{\mathcal{P}_1(\mathsf{b}_t)\hspace{-0.1cm}\supset\hspace{-0.1cm}\mathcal{P}_1(\mathsf{b}_k)\})
\end{align}
\noindent where (a) is by Definition~\ref{def:2} and conditions imposed by  \eqref{eq:cas2_01000} and \eqref{eq:cas2_01001}, i.e., $\mathcal{P}_1(\mathsf{b}_t)\hspace{-0.1cm}\supset\hspace{-0.1cm}\mathcal{P}_1(\mathsf{b}_k)$ for any $t \in \{i,\mathcal{T}_{i}^{1} \}$.

Note that since $i_1(\mathsf{g}_{t}^{[0,\ell-1]})=2^{\ell}, \; t\in \{j,\hat{\mathcal{T}}_{j}^{1}\}$, we have
\begin{align}\label{eq:Case2_new_proof_001}
	i_1(\bigoplus_{t\in \{j,\hat{\mathcal{T}}_{j}^{1}\}}\mathsf{g}^{[0, \ell-1]}_t\mathbb{I}\{\mathcal{P}_1(\mathsf{b}_t)\hspace{-0.1cm}\supset\hspace{-0.1cm}\mathcal{P}_1(\mathsf{b}_k)\})=\begin{cases}
		2^{\ell}& \text{if } |\{ t: \mathcal{P}_1(\mathsf{b}_t)\hspace{-0.1cm}\supset\hspace{-0.1cm}\mathcal{P}_1(\mathsf{b}_k), t\in \{j,\hat{\mathcal{T}}_{j}^{1} \}| \text{  is odd,}\\
		0 & \text{ otherwise}
		\end{cases}
\end{align}
and 
\begin{align}\label{eq:Case2_new_proof_010}
1	\overset{(a)}{\leq} i_1(\mathsf{g}^{[0, \ell-1]}_{\{i,\hat{\mathcal{T}}_{i}^{1}\}})\overset{(b)}{\leq} 2^{\ell}-1 
\end{align}	
where (a) is by  \eqref{eq:low_bound_d} and (b) is by  the property of polar encoding matrix that is $\mathsf{g}_{t,2^{n}-1}=0,  \; 0 \leq t\leq 2^n-2$ and $\mathsf{g}_{t,2^{n}-1}=1,  \;  \text{ if } t= 2^n-1$.  This can be deducted once it is recognized that, by \eqref{eq:subset_proj0}, each  $\mathsf{g}_{t}^{[0,\ell-1]}, \; t \in \{i,\hat{\mathcal{T}}_{i}^{1} \}$ 
is a polar matrix row and $\mathsf{g}_{t,2^{\ell}-1}^{[0,\ell-1]}=0, \; t \in \{i,\hat{\mathcal{T}}_{i}^{1} \} $ since $\mathcal{P}_1(\mathsf{b}_t)\hspace{-0.1cm}\nsupseteq[\ell, 2\cdot \ell-1] $ by \eqref{eq:cas2_01000}, which imposes (b).

By \eqref{eq:Case2_new_proof_001}, we can write \eqref{eq:Case2_new_proof_000} as 
\begin{align}\label{eq:Case2_new_proof_011}
	i_1(\mathsf{g}_i\oplus \mathsf{g}_j\oplus\mathsf{g}_{\hat{\mathcal{T}}_{i}^{1}}\oplus\mathsf{g}_{\hat{\mathcal{T}}_{j}^{1}})=\alpha \cdot i_1(\mathsf{g}^{[0, \ell-1]}_{\{i,\hat{\mathcal{T}}_{i}^{1}\}})+(2^{\ell}-\alpha)\cdot(2^{\ell}-i_1(\mathsf{g}^{[0, \ell-1]}_{\{i,\hat{\mathcal{T}}_{i}^{1}\}})).
\end{align}
Note that the $\alpha$ is such that
\begin{align}\label{eq:Case2_new_proof_110}
    1\overset{(a)}{\leq} \alpha \overset{(b)}{\leq} 2^{\ell}-1	
\end{align}
where (a)  is due to the fact that, when $k=2^{\ell}-1$ and by \eqref{eq:subset_proj1}, $\mathsf{g}^{[0, \ell-1]}_{\{i,\hat{\mathcal{T}}_{i}^{1}\}}$ is the only term in \eqref{eq:Case2_new_proof_000}, i.e.
\begin{align}
	i_1(\mathsf{g}^{[0, \ell-1]}_{\{i,\hat{\mathcal{T}}_{i}^{1}\}}\bigoplus_{t\in \{j,\mathcal{T}_{j}^{1}\}} \mathsf{g}_t\mathbb{I}\{\mathcal{P}_1(\mathsf{b}_t)\hspace{-0.1cm}\supset\hspace{-0.1cm}\mathcal{P}_1(\mathsf{b}_{2^{\ell}-1})\} )= i_1(\mathsf{g}^{[0, \ell-1]}_{\{i,\hat{\mathcal{T}}_{i}^{1}\}})
\end{align} 
and (b) comes from the fact that $\mathsf{g}_{t^{*}}, \;  t^{*}=\max\{\{j,\mathcal{T}_{j}^{1}\}\}$ is the only element from $\{j,\mathcal{T}_{j}^{1}\}$ such that \eqref{eq:Case2_new_proof_000} simplifies to
\begin{align}
    	i_1(\mathsf{g}^{[0, \ell-1]}_{\{i,\hat{\mathcal{T}}_{i}^{1}\}}\bigoplus_{t\in \{j,\mathcal{T}_{j}^{1}\}} \mathsf{g}^{[0,\ell-1]}_t\mathbb{I}\{\mathcal{P}_1(\mathsf{b}_t)\hspace{-0.1cm}\supset\hspace{-0.1cm}\mathcal{P}_1(\mathsf{b}_{ t^{*}})\} )= i_1(\mathsf{g}^{[0, \ell-1]}_{\{i,\hat{\mathcal{T}}_{i}^{1}\}}\oplus \mathsf{g}^{[0,\ell-1]}_{ t^{*}})=2^{\ell}-i_1(\mathsf{g}^{[0, \ell-1]}_{\{i,\hat{\mathcal{T}}_{i}^{1}\}})
\end{align}
by using \eqref{eq:subset_proj1}.
When \eqref{eq:Case2_new_proof_010} and \eqref{eq:Case2_new_proof_110} are satisfied, \eqref{eq:Case2_new_proof_011} cannot be less than $2^{\ell+1}-2$.
This ends the proof for Theorem~\ref{thm:main_min_dis_inc}.
\end{proof}

\begin{theorem}\label{thm:main_min_dis_inc_with_intrsctn_two_rows}
	For any pairs $(i,j)\in \mathcal{N}_{\ell}, \ell\geq 2$, the combination of $\mathsf{g}_i\oplus\mathsf{g}_j$ with higher Hamming weight rows of polar encoding matrix is lower bounded by Hamming weight of $\mathsf{g}_i\oplus\mathsf{g}_j$
	\begin{align}\label{eq:eqt_thm4_two_rows}
		i_1(\mathsf{g}_i\oplus\mathsf{g}_j\oplus \mathsf{g}_{\mathcal{T}})&\geq i_1(\mathsf{g}_i\oplus\mathsf{g}_j)
	\end{align}
	where $\mathcal{T}\subseteq \cup_{p=\ell+1}^{n}\mathcal{N}_p$.
\end{theorem}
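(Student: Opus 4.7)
The plan is to mirror the structure of the proof of Theorem~\ref{thm:main_min_dis_inc_with_intrsctn_three_rows}, specialized to two rows, so that the disjoint-support version (Theorem~\ref{thm:main_min_dis_inc}) can serve as the base case after projecting away the common $1$-bit positions. The key observation is that $i_1(\mathsf{g}_i\oplus\mathsf{g}_j)=2^{\ell+1}-2^{|\mathcal{W}|+1}$ by Theorem~\ref{th:Hamm_of_sum0000}, where $\mathcal{W}:=\mathcal{P}_1(\mathsf{b}_i\bar{\cap}\mathsf{b}_j)$; the target lower bound therefore splits as $2^{|\mathcal{W}|}\cdot(2^{\ell-|\mathcal{W}|+1}-2)$, i.e.\ $2^{|\mathcal{W}|}$ copies of the bound delivered by Theorem~\ref{thm:main_min_dis_inc} on a polar code of dimension $n-|\mathcal{W}|$.

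First, I would reduce to a well-behaved subset of $\mathcal{T}$: set $\tilde{\mathcal{T}}:=\{t\in\mathcal{T}:\mathcal{P}_1(\mathsf{b}_t)\cap\mathcal{P}_0(\mathsf{b}_i\bar{\cup}\mathsf{b}_j)\neq\emptyset\}$ and $\hat{\mathcal{T}}:=\mathcal{T}\setminus\tilde{\mathcal{T}}$. Iteratively applying Theorem~\ref{thm:main_row_merg0} at each index $p\in\mathcal{P}_0(\mathsf{b}_i\bar{\cup}\mathsf{b}_j)$ (exactly as in steps (a)--(d) of Theorem~\ref{thm:main_min_dis_inc_with_intrsctn_three_rows}) peels away all rows of $\tilde{\mathcal{T}}$, giving $i_1(\mathsf{g}_i\oplus\mathsf{g}_j\oplus\mathsf{g}_{\mathcal{T}})\geq i_1(\mathsf{g}_i\oplus\mathsf{g}_j\oplus\mathsf{g}_{\hat{\mathcal{T}}})$, where now every $t\in\hat{\mathcal{T}}$ satisfies $\mathcal{P}_1(\mathsf{b}_t)\subseteq\mathcal{P}_1(\mathsf{b}_i\bar{\cup}\mathsf{b}_j)$.

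Next, I would partition $\mathcal{N}$ into the $2^{|\mathcal{W}|}$ regions $\bigcap_{\ell\in\mathcal{B}_0}\mathcal{M}_{\ell}\bigcap_{\hat{\ell}\in\mathcal{W}\setminus\mathcal{B}_0}\mathcal{M}^{c}_{\hat{\ell}}$ indexed by $\mathcal{B}_0\subseteq\mathcal{W}$. Because $\mathcal{W}\subseteq\mathcal{P}_1(\mathsf{b}_i)\cap\mathcal{P}_1(\mathsf{b}_j)$, equation \eqref{eq:subset_proj1} implies that $\mathsf{g}_i$ and $\mathsf{g}_j$ restrict to $\mathsf{g}_i^{\mathcal{W}}$ and $\mathsf{g}_j^{\mathcal{W}}$ in \emph{every} region, while each $\mathsf{g}_t$, $t\in\hat{\mathcal{T}}$, restricts to $\mathsf{g}_t^{\mathcal{W}}$ only in those regions with $\mathcal{B}_0\subseteq\mathcal{P}_1(\mathsf{b}_t)$ and to $\mathsf{0}$ otherwise. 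On the length-$N/2^{|\mathcal{W}|}$ projection, the pair $(\mathsf{b}_i^{\mathcal{W}},\mathsf{b}_j^{\mathcal{W}})$ sits in $\mathcal{N}_{\ell-|\mathcal{W}|}$ with disjoint support (precisely because $\mathcal{W}$ was the full intersection), and any contributing $\mathsf{b}_t^{\mathcal{W}}$ has weight $i_1(\mathsf{b}_t)-|\mathcal{P}_1(\mathsf{b}_t)\cap\mathcal{W}|\geq(\ell+1)-|\mathcal{W}|$. Therefore Theorem~\ref{thm:main_min_dis_inc} applies inside the smaller polar code and lower-bounds each region's contribution by $2^{\ell-|\mathcal{W}|+1}-2$. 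Summing the $2^{|\mathcal{W}|}$ regions yields $i_1(\mathsf{g}_i\oplus\mathsf{g}_j\oplus\mathsf{g}_{\hat{\mathcal{T}}})\geq 2^{\ell+1}-2^{|\mathcal{W}|+1}=i_1(\mathsf{g}_i\oplus\mathsf{g}_j)$, establishing \eqref{eq:eqt_thm4_two_rows}.

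The main obstacle I anticipate is the careful bookkeeping needed when invoking Theorem~\ref{thm:main_min_dis_inc} region-by-region: one must verify that the participating subset $\hat{\mathcal{T}}_{\mathcal{B}_0}:=\{t\in\hat{\mathcal{T}}:\mathcal{B}_0\subseteq\mathcal{P}_1(\mathsf{b}_t)\}$ indeed lies, after projection, in $\bigcup_{p=\ell-|\mathcal{W}|+1}^{n-|\mathcal{W}|}\mathcal{N}_{p}$ of the \emph{smaller} polar code; this uses both $\mathcal{P}_1(\mathsf{b}_t)\subseteq\mathcal{P}_1(\mathsf{b}_i\bar{\cup}\mathsf{b}_j)$ (from the reduction step) and $i_1(\mathsf{b}_t)\geq\ell+1$. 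The degenerate case $\mathcal{W}=\emptyset$ reduces immediately to Theorem~\ref{thm:main_min_dis_inc}, and the case $|\mathcal{W}|=\ell$ (equality would force $i=j$) is excluded by $(i,j)\in\mathcal{N}_\ell$ with $i\neq j$, so the argument covers the entire range $0\leq|\mathcal{W}|<\ell$, with $\ell\geq 2$ only being required to ensure that Theorem~\ref{thm:main_min_dis_inc} can be invoked non-trivially on the projected code.
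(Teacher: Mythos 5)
Your proposal is correct and follows essentially the same route as the paper's proof: first peel off the rows with support outside $\mathcal{P}_1(\mathsf{b}_i\bar{\cup}\mathsf{b}_j)$ by repeated use of Theorem~\ref{thm:main_row_merg0}, then partition the coordinates into the $2^{|\mathcal{W}|}$ regions indexed by subsets of $\mathcal{W}=\mathcal{P}_1(\mathsf{b}_i\bar{\cap}\mathsf{b}_j)$ and apply Theorem~\ref{thm:main_min_dis_inc} to each projected term, summing to $2^{\ell+1}-2^{|\mathcal{W}|+1}=i_1(\mathsf{g}_i\oplus\mathsf{g}_j)$. The only cosmetic difference is that the paper disposes of the case $i_1(\mathsf{b}_i\bar{\cap}\mathsf{b}_j)=\ell-1$ separately via the minimum-weight bound, whereas you absorb it into the general region argument, which also works since Theorem~\ref{thm:main_min_dis_inc} covers the projected weight $\ell-|\mathcal{W}|=1$ trivially.
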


\begin{proof}
	We can divide the index set into two subsets such that
	\begin{align}\label{eq:in_thm_subset0100}
		\tilde{\mathcal{T}}:=\{k:\mathcal{P}_1(\mathsf{b}_k)\cap\mathcal{P}_0(\mathsf{b}_i\bar{\cup}\mathsf{b}_j)\neq \emptyset, k\in \mathcal{T} \}
	\end{align}
	and $\hat{\mathcal{T}}=\mathcal{T}\setminus \tilde{\mathcal{T}}$.
	
	Applying the same reasoning as in Theorem~\ref{thm:main_min_dis_inc} in \eqref{eq:outside_of_scope}, the proof is conducted for $\hat{\mathcal{T}}$. 
	
	When $i_1(\mathsf{b}_i\bar{\cap}\mathsf{b}_j)=\ell-1$, the proof is trivial since 
	\begin{align}
		i_1(\mathsf{g}_{\hat{\mathcal{T}}\cup \{i,j\}})\geq \min_{k\in \hat{\mathcal{T}}\cup \{i,j\}}2^{i_1(\mathsf{b}_k)}=2^{i_1(\mathsf{b}_i)}=2^{\ell}
	\end{align}
	and 	$i_1(\mathsf{g}_i\oplus \mathsf{g}_j)=2^{\ell}$ by \cite[Theorem~2]{ICC_row_merging} if $i_1(\mathsf{b}_i\bar{\cap}\mathsf{b}_j)=\ell-1$.
	
	Now assume that $i_1(\mathsf{b}_i\bar{\cap}\mathsf{b}_j)=|\mathcal{P}_1(\mathsf{b}_i\bar{\cap}\mathsf{b}_j)|\leq \ell-2$ and let $\mathcal{W}=\mathcal{P}_1(\mathsf{b}_i\bar{\cap}\mathsf{b}_j)$, which is the index set of common one bit positions of $\mathsf{b}_i$ and $\mathsf{b}_j$. Then, by partitioning the row indices of the polar encoding matrix, we obtain the following expression
	\begin{align}
		i_1(\mathsf{g}_{\{i,j\}}  \oplus \mathsf{g}_{\hat{\mathcal{T}}}) &=i_1(\mathsf{g}_{\{i,j\}} \hspace{-0.12cm} \oplus \hspace{-0.08cm} \mathsf{g}_{\hat{\mathcal{T}}}| \mathcal{M}^{c}_{\mathcal{W}(|\mathcal{W}|-1)}\hspace{-0.08cm}\cap\hspace{-0.08cm} \mathcal{M}^{c}_{\mathcal{W}(|\mathcal{W}|-2)}\hspace{-0.08cm}\cap\hspace{-0.08cm} \cdots \hspace{-0.08cm} \cap \hspace{-0.08cm} \mathcal{M}^{c}_{ \mathcal{W}(0)})  \\ &+i_1(\mathsf{g}_{\{i,j\}} \hspace{-0.12cm} \oplus \hspace{-0.08cm} \mathsf{g}_{\hat{\mathcal{T}}}| \mathcal{M}^{c}_{\mathcal{W}(|\mathcal{W}|-1)}\hspace{-0.08cm} \cap \hspace{-0.08cm} \mathcal{M}^{c}_{\mathcal{W}(|\mathcal{W}|-2)}\hspace{-0.08cm} \cap \hspace{-0.08cm} \cdots \hspace{-0.08cm}\cap \hspace{-0.08cm} \mathcal{M}_{ \mathcal{W}(0)}) \nonumber \\ & \hspace{1.55cm} \vdots \nonumber \\ &+i_1(\mathsf{g}_{\{i,j\}} \hspace{-0.12cm} \oplus \hspace{-0.08cm} \mathsf{g}_{\hat{\mathcal{T}}}|\mathcal{M}_{\mathcal{W}(|\mathcal{W}|-1)} \hspace{-0.08cm} \cap \hspace{-0.08cm} \mathcal{M}_{\mathcal{W}(|\mathcal{W}|-2)} \hspace{-0.08cm} \cap \hspace{-0.08cm} \cdots \hspace{-0.08cm} \cap \hspace{-0.08cm} \mathcal{M}_{ \mathcal{W}(0)})\nonumber \\
		& \overset{(a)}{=} i_1(\mathsf{g}^{\mathcal{W}}_{\{i,j\}} \bigoplus_{k\in \hat{\mathcal{T}}} \mathsf{g}^{\mathcal{W}}_{k})+ i_1(\mathsf{g}^{\mathcal{W}}_{\{i,j\}}\bigoplus_{k\in \hat{\mathcal{T}}} \mathsf{g}^{\mathcal{W}}_{k}\mathbb{I}\{b_{k,\mathcal{W}(0)}=1\}) \nonumber \\ & \hspace{1.5cm}\vdots \nonumber \\ &+ i_1(\mathsf{g}^{\mathcal{W}}_{\{i,j\}} \bigoplus_{k\in \hat{\mathcal{T}}} \mathsf{g}^{\mathcal{W}}_{k}\mathbb{I}\{b_{k,\mathcal{W}(|\mathcal{W}|-1)}=b_{k,\mathcal{W}(|\mathcal{W}|-2)}=\cdots=b_{k,\mathcal{W}(0)}=1\})\notag
	\end{align}
	where (a) is due to \eqref{eq:subset_proj0} and \eqref{eq:subset_proj1}. By Theorem~\ref{thm:main_min_dis_inc}, each term of the partition is greater than $2^{\ell-|\mathcal{W}|+1}-2$ since $|\mathcal{P}_1(\mathsf{b}_k)\cap \mathcal{P}_1(\mathsf{b}_i\oplus \mathsf{b}_j)|>\ell-|\mathcal{W}|$ for any $k\in \hat{\mathcal{T}}$ by assumption. Then, 
	\begin{align}
		i_1(\mathsf{g}_i\oplus \mathsf{g}_j \oplus \mathsf{g}_{\hat{\mathcal{T}}})&\geq2^{i_1(\mathsf{b}_i\bar{\cap} \mathsf{b}_j)}\cdot (2^{\ell-i_1(\mathsf{b}_i\bar{\cap} \mathsf{b}_j)+1}-2)=2^{\ell+1}-2^{i_1(\mathsf{b}_i\bar{\cap} \mathsf{b}_j)+1}\nonumber \\
		& \overset{(a)}{=} 	i_1(\mathsf{g}_i\oplus \mathsf{g}_j)
	\end{align}
	where (a) is due to Theorem~\ref{th:Hamm_of_sum0000}.
\end{proof}

\subsection{Proof of Theorem~\ref{thm:main_min_dis_inc_three_rows}}\label{app:thm3}

The proof relies on Theorem~\ref{thm:main_min_dis_inc}, given at the beginning of the appendices.  Theorem~\ref{thm:main_min_dis_inc} states that, for any given pair of rows with the same Hamming weight and no intersection in their binary representations, the Hamming weights of combination of the given pair with any subset of rows with higher Hamming weights is lower bounded by the Hamming weight of the given pair.

Since 
\begin{align}
d(\bar{\mathbf{C}})&=\min\{d(\mathbf{C}), \min_{\mathcal{T}\subseteq \mathcal{A}}	i_1(\mathsf{g}_{\{i,j,k\}}\oplus \mathsf{g}_{\mathcal{T}})\} %
\end{align}
it is sufficient to prove the following statement 
\begin{align}
i_1(\mathsf{g}_{\{i,j,k\}}\oplus \mathsf{g}_{\mathcal{T}})&\geq 2^{\ell+1}, \; \; \forall \mathcal{T}\subseteq \mathcal{A}
\end{align}
to prove the theorem.


We can divide the index set into two subsets such that
\begin{align}\label{eq:in_thm_subset010012}
\tilde{\mathcal{T}}:=\{t:\mathcal{P}_1(\mathsf{b}_t)\cap\mathcal{P}_0(\mathsf{b}_i\bar{\cup}\mathsf{b}_j\bar{\cup}\mathsf{b}_k)\neq \emptyset, t\in \mathcal{T} \}
\end{align}
and $\hat{\mathcal{T}}=\mathcal{T}\setminus \tilde{\mathcal{T}}$.
Then by Theorem~\ref{thm:main_row_merg0}
\begin{align}
i_1(\mathsf{g}_{\{i,j,k\}}\oplus\mathsf{g}_{\hat{\mathcal{T}}}\oplus\mathsf{g}_{ \tilde{\mathcal{T}}})
&\geq \max_{p_0\in \mathcal{P}_0(\mathsf{b}_i\bar{\cup}\mathsf{b}_j\bar{\cup}\mathsf{b}_k))}i_1(\mathsf{g}_{\{i,j\}}\oplus \mathsf{g}_{\hat{\mathcal{T}}} \bigoplus_{t\in \tilde{\mathcal{T}}}\mathsf{g}_t \mathbb{I}\{b_{t,p_0}=0\})\nonumber\\
&\geq \hspace{-0.25cm} \max_{p_1\in \mathcal{P}_0(\mathsf{b}_i\bar{\cup}\mathsf{b}_j\bar{\cup}\mathsf{b}_k))\setminus \{p_0\}} i_1(\mathsf{g}_{\{i,j,k\}}\oplus \mathsf{g}_{\hat{\mathcal{T}}}\bigoplus_{t\in \tilde{\mathcal{T}}}\mathsf{g}_t \mathbb{I}\{b_{t,p_0}={b_{t,p_1}=0}\})\nonumber \\
&\hspace{0.15cm}\vdots \nonumber \\
&\geq i_1(\mathsf{g}_{\{i,j,k\}} \hspace{-0.08cm} \oplus \hspace{-0.1cm} \mathsf{g}_{\hat{\mathcal{T}}} \bigoplus_{t\in \tilde{\mathcal{T}}}\mathsf{g}_t \mathbb{I}\{b_{t,p_0}\hspace{-0.15cm}= \hspace{-0.08cm}{b_{t,p_1} \hspace{-0.1cm}= \hspace{-0.05cm}\cdots \hspace{-0.05cm} = b_{t,p_{n-2 \ell -1}}=0}\})\nonumber \\
&\overset{(a)}{=}i_1(\mathsf{g}_i\oplus \mathsf{g}_j\oplus \mathsf{g}_k \oplus \mathsf{g}_{\hat{\mathcal{T}}})
\end{align}
where $\{p_0,p_1,\cdots,p_{n-2\ell-1}\}=\mathcal{P}_0(\mathsf{b}_i\bar{\cup}\mathsf{b}_j\bar{\cup}\mathsf{b}_k)$ and (a) comes from \eqref{eq:in_thm_subset010012}, which implies for any $t\in \tilde{\mathcal{T}}$, $\mathcal{P}_1(\mathsf{b}_t)\cap\mathcal{P}_0(\mathsf{b}_i\bar{\cup}\mathsf{b}_j\bar{\cup}\mathsf{b}_k)\neq \emptyset$. This result means that the Hamming weight of $\mathsf{g}_{\{i,j,k,\mathcal{T}\}}$ is lower bounded by the Hamming weight of $\mathsf{g}_{\{i,j,k,\hat{\mathcal{T}}\}}$. Therefore, in the following, the proof is conducted for $\hat{\mathcal{T}}$. 

We divide $\hat{\mathcal{T}}$ into three subsets
\begin{align}
&\hat{\mathcal{T}}_{k,1}=\{t:\mathcal{P}_1(\mathsf{b}_t)\supset \mathcal{P}_1(\mathsf{b}_k), t\in \hat{\mathcal{T}}\} \\
&\hat{\mathcal{T}}_{k,2}=\{t:|\mathcal{P}_1(\mathsf{b}_t)\cap \mathcal{P}_1(\mathsf{b}_k)|=1, t\in \hat{\mathcal{T}}\} \\
& \tilde{\mathcal{T}}_0=\hat{\mathcal{T}}\setminus \{\hat{\mathcal{T}}_{k,1}\cup\hat{\mathcal{T}}_{k,2}\}
\end{align}

In the following, we show that the theorem holds whatever the sets $\hat{\mathcal{T}}_{k,1}$ and $\hat{\mathcal{T}}_{k,2}$ are empty or not.

\paragraph*{Case~1, $\hat{\mathcal{T}}_{k,1}\neq \emptyset$ and $\hat{\mathcal{T}}_{k,2} \neq \emptyset$}  Since, $k\in \mathcal{N}_2$, we denote $\{p_1,p_2\}=\mathcal{P}_1(\mathsf{b}_k)$. Then, by partitioning the row indices of the polar encoding matrix, we obtain the following expression:
\begin{align}
	i_1(\mathsf{g}_i\oplus\mathsf{g}_j\oplus\mathsf{g}_k\oplus \mathsf{g}_{\hat{\mathcal{T}}})&=i_1(\mathsf{g}_i\oplus\mathsf{g}_j\oplus\mathsf{g}_k\oplus \mathsf{g}_{\hat{\mathcal{T}}}|\mathcal{M}^{c}_{p_1}\cap \mathcal{M}^{c}_{p_2})+i_1(\mathsf{g}_i\oplus\mathsf{g}_j\oplus\mathsf{g}_k\oplus \mathsf{g}_{\hat{\mathcal{T}}}|\mathcal{M}^{c}_{p_1}\cap \mathcal{M}_{p_2})\nonumber \\
	&\quad+ i_1(\mathsf{g}_i\oplus\mathsf{g}_j\oplus\mathsf{g}_k\oplus \mathsf{g}_{\hat{\mathcal{T}}}|\mathcal{M}_{p_1}\cap \mathcal{M}^{c}_{p_2})+ i_1(\mathsf{g}_i\oplus\mathsf{g}_j\oplus\mathsf{g}_k\oplus \mathsf{g}_{\hat{\mathcal{T}}}|\mathcal{M}_{p_1}\cap \mathcal{M}_{p_2})\nonumber \\
	&\overset{(a)}{=}i_1(\mathsf{g}^{\{p_1,p_2\}}_i\oplus \mathsf{g}^{\{p_1,p_2\}}_j\oplus \mathsf{g}^{\{p_1,p_2\}}_k \hspace{-0.15cm} \bigoplus_{\theta\in \hat{\mathcal{T}}_{k,1}}\hspace{-0.15cm}\mathsf{g}^{\{p_1,p_2\}}_{\theta}\bigoplus_{\theta\in \hat{\mathcal{T}}_{k,2}}\hspace{-0.15cm}\mathsf{g}^{\{p_1,p_2\}}_{\theta}\bigoplus_{\theta\in \tilde{\mathcal{T}}_0}\mathsf{g}^{\{p_1,p_2\}}_{\theta})\nonumber \\
	&+ i_1( \mathsf{g}^{\{p_1,p_2\}}_k \hspace{-0.15cm}\bigoplus_{\theta\in \hat{\mathcal{T}}_{k,1}}\hspace{-0.15cm}\mathsf{g}^{\{p_1,p_2\}}_{\theta}\bigoplus_{\theta\in \hat{\mathcal{T}}_{k,2}}\hspace{-0.15cm}\mathsf{g}^{\{p_1,p_2\}}_{\theta}\mathbb{I}\{b_{\theta,p_2}=1\})\nonumber\\
	&+ i_1(\mathsf{g}^{\{p_1,p_2\}}_k \hspace{-0.15cm}\bigoplus_{\theta\in \hat{\mathcal{T}}_{k,1}}\hspace{-0.15cm}\mathsf{g}^{\{p_1,p_2\}}_{\theta}\bigoplus_{\theta\in \hat{\mathcal{T}}_{k,2}}\hspace{-0.15cm}\mathsf{g}^{\{p_1,p_2\}}_{\theta}\mathbb{I}\{b_{\theta,p_1}=1\})+ i_1(\mathsf{g}^{\{p_1,p_2\}}_k \hspace{-0.15cm}\bigoplus_{\theta\in \hat{\mathcal{T}}_{k,1}}\hspace{-0.15cm}\mathsf{g}^{\{p_1,p_2\}}_{\theta}\})\nonumber\\
	&\overset{(b)}{\geq} i_1(\mathsf{g}^{\{p_1,p_2\}}_i \oplus \mathsf{g}^{\{p_1,p_2\}}_j \bigoplus_{\theta\in \hat{\mathcal{T}}_{k,2}}\hspace{-0.15cm}\mathsf{g}^{\{p_1,p_2\}}_{\theta}\mathbb{I}\{b_{\theta,p_2}=0\}\bigoplus_{\theta\in \tilde{\mathcal{T}}_0}\mathsf{g}^{\{p_1,p_2\}}_{\theta})\nonumber \\ 
	&+ i_1(\mathsf{g}^{\{p_1,p_2\}}_k \hspace{-0.15cm}\bigoplus_{\theta\in \hat{\mathcal{T}}_{k,1}}\hspace{-0.15cm}\mathsf{g}^{\{p_1,p_2\}}_{\theta}\bigoplus_{\theta\in \hat{\mathcal{T}}_{k,2}}\hspace{-0.15cm}\mathsf{g}^{\{p_1,p_2\}}_{\theta}\mathbb{I}\{b_{\theta,p_1}=1\})+ i_1(\mathsf{g}^{\{p_1,p_2\}}_k \hspace{-0.15cm}\bigoplus_{\theta\in \hat{\mathcal{T}}_{k,1}}\hspace{-0.15cm}\mathsf{g}^{\{p_1,p_2\}}_{\theta}\})
	\nonumber\\
	&\overset{(c)}{\geq} i_1(\mathsf{g}^{\{p_1,p_2\}}_i \oplus \mathsf{g}^{\{p_1,p_2\}}_j \bigoplus_{\theta\in \hat{\mathcal{T}}_{k,2}}\hspace{-0.15cm}\mathsf{g}^{\{p_1,p_2\}}_{\theta}\mathbb{I}\{b_{\theta,p_2}=0\} \bigoplus_{\theta\in \tilde{\mathcal{T}}_0}\mathsf{g}^{\{p_1,p_2\}}_{\theta} )\nonumber \\ 
	& +i_1(\bigoplus_{\theta\in \hat{\mathcal{T}}_{k,2}}\hspace{-0.15cm}\mathsf{g}^{\{p_1,p_2\}}_{\theta}\mathbb{I}\{b_{\theta,p_1}=1\})\overset{(d)}{\geq} 2^{\ell+1}
\end{align}
where (a) is due to \eqref{eq:subset_proj0} and \eqref{eq:subset_proj1}, (b) and (c) are due to the fact that $i_1(\mathsf{u}\oplus \mathsf{v})+i_1(\mathsf{v})\geq i_1(\mathsf{u})$, i.e., by \eqref{eq:u_v}, and (d) is by \eqref{eq:low_bound_d}, i.e., the Hamming weight of any row at each expression is at least $2^{\ell}$. This means that if neither $\hat{\mathcal{T}}_{k,1}$ nor $\hat{\mathcal{T}}_{k,2}$ is empty, the lower bound is satisfied whatever $\tilde{\mathcal{T}}_0$.

\paragraph*{Case~2, $\hat{\mathcal{T}}_{k,1}=\hat{\mathcal{T}}_{k,2}=\emptyset$} By partitioning the row indices with respect to $\{p_1,p_2\}=\mathcal{P}_1(\mathsf{b}_k)$, we obtain the following expression:
\begin{align}
	i_1(\mathsf{g}_i\oplus\mathsf{g}_j\oplus\mathsf{g}_k\oplus \mathsf{g}_{\mathcal{T}})&=i_1(\mathsf{g}_i\oplus\mathsf{g}_j\oplus\mathsf{g}_k\oplus \mathsf{g}_{\mathcal{T}}|\mathcal{M}^{c}_{p_1}\cap \mathcal{M}^{c}_{p_2})+i_1(\mathsf{g}_i\oplus\mathsf{g}_j\oplus\mathsf{g}_k\oplus \mathsf{g}_{\mathcal{T}}|\mathcal{M}^{c}_{p_1}\cap \mathcal{M}_{p_2})\nonumber \\
	&\quad+ i_1(\mathsf{g}_i\oplus\mathsf{g}_j\oplus\mathsf{g}_k\oplus \mathsf{g}_{\mathcal{T}}|\mathcal{M}_{p_1}\cap \mathcal{M}^{c}_{p_2})+ i_1(\mathsf{g}_i\oplus\mathsf{g}_j\oplus\mathsf{g}_k\oplus \mathsf{g}_{\mathcal{T}}|\mathcal{M}_{p_1}\cap \mathcal{M}_{p_2})\nonumber \\
	&=\underbrace{i_1(\mathsf{g}^{\{p_1,p_2\}}_i\oplus \mathsf{g}^{\{p_1,p_2\}}_j\oplus \mathsf{g}^{\{p_1,p_2\}}_k \hspace{-0.15cm} \bigoplus_{\theta\in \tilde{\mathcal{T}}_0}\mathsf{g}^{\{p_1,p_2\}}_{\theta})+i_1( \mathsf{g}^{\{p_1,p_2\}}_k )}_{ \geq i_1(\mathsf{g}^{\{p_1,p_2\}}_i\oplus \mathsf{g}^{\{p_1,p_2\}}_j \bigoplus_{\theta\in \tilde{\mathcal{T}}_0}\mathsf{g}^{\{p_1,p_2\}}_{\theta}) \text{ by }\eqref{eq:u_v}} + 2\cdot i_1( \mathsf{g}^{\{p_1,p_2\}}_k )\nonumber\\
	&\geq i_1(\mathsf{g}^{\{p_1,p_2\}}_i\oplus \mathsf{g}^{\{p_1,p_2\}}_j \bigoplus_{\theta\in \tilde{\mathcal{T}}_0}\mathsf{g}^{\{p_1,p_2\}}_{\theta})+2\cdot i_1( \mathsf{g}^{\{p_1,p_2\}}_k )\overset{(a)}{=} 2^{\ell+1}
\end{align}
where (a) comes from the fact that the rows $\mathsf{g}^{\{p_1,p_2\}}_i$, $\mathsf{g}^{\{p_1,p_2\}}_j$ and $\mathsf{g}^{\{p_1,p_2\}}_{\tilde{\mathcal{T}}_0}$ comply with the conditions of Theorem~\ref{thm:main_min_dis_inc}. Hence,  $i_1(\mathsf{g}^{\{p_1,p_2\}}_{\{i,j,\tilde{\mathcal{T}}_0\}})\geq 2^{\ell+1}-2$, and $2\cdot i_1( \mathsf{g}^{\{p_1,p_2\}}_k )=2$.


The same result can be shown for the other cases of $\hat{\mathcal{T}}_{k,1}, \hat{\mathcal{T}}_{k,2}$ by following similar steps.

\subsection{Proof of Theorem~\ref{thm:Last_ensemble}}\label{app:thm4}

The proof of this theorem is based on Theorem~\ref{thm:main_row_merg0} and Theorem~\ref{thm:main_min_dis_inc_with_intrsctn_two_rows}. The latter is a generalization of Theorem~\ref{thm:main_min_dis_inc} for the case where there are some common 1-bit indices in the intersection of the binary representations of the pair.


For $m= t_0+t_1+1$, the code $\bar{\mathbf{C}}$ is given as
\begin{align}
	\bar{\mathbf{C}}=\{\mathbf{C}\}\hspace{-0.3cm}\bigcup_{\mathcal{D}\subseteq [0,t_0+t_1]}\hspace{-0.5cm}\{\mathsf{c}:\mathsf{c}=\bigoplus_{\theta \in \mathcal{D}}\mathsf{g}_{\{\Pi_{i}^{\theta},\Pi_{j}^{\theta},\Pi_{k}^{\theta} \}}\oplus \mathsf{g}_{\mathcal{T}}, \, \mathcal{T}\subseteq \mathcal{A}\}
\end{align}
and hence, it is sufficient to prove the following statement
\begin{align}
	i_1(\bigoplus_{\theta \in \mathcal{D}}\mathsf{g}_{\{\Pi_{i}^{\theta},\Pi_{j}^{\theta},\Pi_{k}^{\theta} \}}\oplus \mathsf{g}_{\mathcal{T}})\geq 2^{\ell+1}, \; \; \; 
\end{align}
for any $\mathcal{D}\subseteq [0,t_0+t_1]\text{ and }\mathcal{T}\subseteq \mathcal{A}$.

The proof is done for the whole set $\{\mathsf{g}_{\Pi^{d}_{i}}\oplus\mathsf{g}_{\Pi^{d}_{j}}\oplus\mathsf{g}_{\Pi^{d}_{k}}\}$, $d\in [0,t_0+t_1] $ and the case $i_0(\mathsf{b}_i\bar{\cup} \mathsf{b}_j\bar{\cup} \mathsf{b}_k)>i_1(\mathsf{b}_i\bar{\cap} \mathsf{b}_j\bar{\cap} \mathsf{b}_k)$. 
The same result can be found by following similar steps for any combination of $\{\mathsf{g}_{\Pi^{d}_{i}}\oplus\mathsf{g}_{\Pi^{d}_{j}}\oplus\mathsf{g}_{\Pi^{d}_{k}}\}$, $d\in [0,t_0+t_1] $ and other cases such as $i_0(\mathsf{b}_i\bar{\cup} \mathsf{b}_j\bar{\cup} \mathsf{b}_k)=i_1(\mathsf{b}_i\bar{\cap} \mathsf{b}_j\bar{\cap} \mathsf{b}_k)$ or $i_0(\mathsf{b}_i\bar{\cup} \mathsf{b}_j\bar{\cup} \mathsf{b}_k)=0$. 

For the considered case, the main steps are summarized in \eqref{eq_topofpage} on top of the next page, where (a) is due to the fact that $\mathcal{M}^{c}_{n-2}\cap \mathcal{M}_{n-2}= \emptyset$, (b) is due to \eqref{eq:proj_v1}, (c) is due to the fact that the minimum Hamming weight of any row involved in the right-hand side of the equation is $2^{\ell-1}$, (d) is due to Theorem~\ref{thm:main_row_merg0}, (e) is due to \eqref{eq:set_greater1}, \eqref{eq:set_greater2} and \eqref{eq:set_greater3}, (f) is since $\Pi^{t_1}_{i},\Pi^{t_1}_{j}$ and $\{t: b_{t,n-1}\!=\!b_{t,t_0}\!=\!\cdots\!=\!b_{t,t_1}\!=\!0, t\in \mathcal{T}\}$ comply with the conditions of Theorem~\ref{thm:main_min_dis_inc_with_intrsctn_two_rows}, (g) is due to Theorem~\ref{th:Hamm_of_sum0000} and (h) is due to the assumption $i_1(\mathsf{b}_i\bar{\cap}\mathsf{b}_j)\leq \ell-2$.

The proof can be conducted the same way for all $m < t_0 + t_1 + 1$ and the proof is complete.
\begin{figure*}
	\begin{align}
		&i_1\left(\bigoplus_{d=0}^{t_0+t_1}\mathsf{g}_{\{\Pi^{d}_{i},\Pi^{d}_{j},\Pi^{d}_{k}\}}\oplus \mathsf{g}_{\mathcal{T}}\right)\overset{(a)}{=}i_1\left(\bigoplus_{d=0}^{t_0+t_1}\mathsf{g}_{\{\Pi^{d}_{i},\Pi^{d}_{j},\Pi^{d}_{k}\}}\oplus \mathsf{g}_{\mathcal{T}}|\mathcal{M}^{c}_{n-2}\right)+i_1\left(\bigoplus_{d=0}^{t_0+t_1}\mathsf{g}_{\{\Pi^{d}_{i},\Pi^{d}_{j},\Pi^{d}_{k}\}}\oplus \mathsf{g}_{\mathcal{T}}|\mathcal{M}_{n-2}\right)\nonumber \\
		&\qquad \overset{(b)}{=}i_1\left(\bigoplus_{d=0}^{t_0+t_1}\mathsf{g}^{n-2}_{\{\Pi^{d}_{i},\Pi^{d}_{j},\Pi^{d}_{k}\}} \oplus \mathsf{g}^{n-2}_{\mathcal{T}}\right)+i_1\left(\bigoplus_{d=0}^{t_0+t_1}\bigoplus_{t\in\{\Pi^{d}_{i},\Pi^{d}_{j},\Pi^{d}_{k}\}}\mathsf{g}^{n-2}_{t}\mathbb{I}\{b_{t,n-2}=1\}\bigoplus_{t\in \mathcal{T}}\mathsf{g}^{n-2}_{t}\mathbb{I}\{b_{t,n-2}=1\}\right) \nonumber \\ &\qquad\overset{(c)}{\geq} i_1\left(\bigoplus_{d=0}^{t_0+t_1}\mathsf{g}^{n-2}_{\{\Pi^{d}_{i},\Pi^{d}_{j},\Pi^{d}_{k}\}} \oplus \mathsf{g}^{n-2}_{\mathcal{T}}\right)+ 2^{\ell-1}\nonumber \\ 
		&\qquad\overset{(d)}{\geq}i_1\left(\bigoplus_{d=0}^{t_0+t_1}\hspace{-0.1cm}\bigoplus_{t\in\{\Pi^{d}_{i},\Pi^{d}_{j},\Pi^{d}_{k}\}} \hspace{-0.6cm}\mathsf{g}^{n-2}_{t}\mathbb{I}\{b_{t,n-1}\!=\!b_{t,t_0}\!=\!\cdots\!=\!b_{t,t_1}\!=\!0\}\bigoplus_{t\in \mathcal{T}}\mathsf{g}^{n-2}_{t}\mathbb{I}\{b_{t,n-1}\!=\!b_{t,t_0}\!=\!\cdots\!=\!b_{t,t_1}\!=\!0\} \right) \nonumber \\ & \hspace{15cm}  + 2^{\ell-1} \nonumber \\ 
		&\qquad\overset{(e)}{=} i_1\left( \mathsf{g}^{n-2}_{\Pi^{t_1}_{i}} \oplus \mathsf{g}^{n-2}_{\Pi^{t_1}_{j}}\bigoplus_{t\in \mathcal{T}}\mathsf{g}^{n-2}_{t}\mathbb{I}\{b_{t,n-1}\!=\!b_{t,t_0}\!=\!\cdots\!=\!b_{t,t_1}\!=\!0\}\right) + 2^{\ell-1} \nonumber \\
		&\qquad\overset{(f)}{\geq} i_1\left(\mathsf{g}^{n-2}_{\Pi^{t_1}_{i}} \oplus \mathsf{g}^{n-2}_{\Pi^{t_1}_{j}}\right) + 2^{\ell-1} \overset{(g)}{=} 2^{\ell+1}-2^{i_1(\mathsf{b}_i\bar{\cap} \mathsf{b}_j)+1}+2^{\ell-1}\overset{(h)}{\geq} 2^{\ell+1}\label{eq_topofpage}
	\end{align}
	\hrulefill
\end{figure*}



\end{document}